\DeclareMathAlphabet{\mathpzc}{OT1}{pzc}{m}{it}
\spnewtheorem{assumption}[theorem]{Assumption}{\bfseries}{\itshape}
\spnewtheorem{notation}[theorem]{Notation}{\bfseries}{\itshape}
\newenvironment{bprooftree}
  {\leavevmode\hbox\bgroup}
  {\DisplayProof\egroup}
\newcommand{\secref}[1]{\S\ref{#1}}%
\newcommand\lang{\textbf{Aff}}
\newcommand\ket[1]{{|{#1}\rangle}}
\newcommand{\id}{\text{id}}
\newcommand{\Id}{\text{Id}}
\newcommand{\op}{\ensuremath{\mathrm{op}}}
\newcommand{\Ob}{\mathrm{Ob}}
\newcommand{\WNMIU}{\ensuremath{\mathbf{W}^*_{\mathrm{NMIU}}}}
\newcommand{\DD}{\ensuremath{\mathbf{D}}}
\newcommand{\AAA}{\ensuremath{\mathbf{A}}}
\newcommand{\BB}{\ensuremath{\mathbf{B}}}
\newcommand{\CC}{\ensuremath{\mathbf{C}}}
\newcommand{\VV}{\ensuremath{\mathbf{V}}}
\newcommand{\CCc}{\ensuremath{\mathbf{C}_a}}
\newcommand{\JJ}{\ensuremath{\mathbf{J}}}
\newcommand{\Wstar}{\ensuremath{\mathbf{W}^*_{\mathrm{NCPSU}}}}
\newcommand{\cpo}{\ensuremath{\mathbf{DCPO}}}
\newcommand{\dcpobs}{\ensuremath{\mathbf{DCPO}_{\perp !}}}
\newcommand{\cpobs}{\ensuremath{\mathbf{DCPO}_{\perp !}}}
\newcommand{\bit}{\textbf{bit}}
\newcommand{\newunit}{\textbf{new unit}}
\newcommand{\discard}{\textbf{discard}}
\newcommand{\sskip}{\textbf{skip}}
\newcommand{\while}[2]{\ensuremath{\textbf{while}\ $#1$\ \textbf{do}\ $#2$}}
\newcommand{\lleft}{\textbf{left}}
\newcommand{\rright}{\textbf{right}}
\newcommand{\ccase}{\textbf{case}}
\newcommand{\fold}{\textbf{fold}}
\newcommand{\sfold}{\ensuremath{\mathrm{fold}}}
\newcommand{\sunfold}{\ensuremath{\mathrm{unfold}}}
\newcommand{\efold}{\textswab{fold}}
\newcommand{\eunfold}{\textswab{unfold}}
\newcommand{\ffold}{\textbf{fold}}
\newcommand{\lrb}[1]{{\llbracket #1 \rrbracket}}
\newcommand{\elrb}[1]{{  \talloblong #1 \talloblong }}
\newcommand{\naturalto}{\ensuremath{\Rightarrow}}
\tikzstyle{braceedge}=[decorate,decoration={brace,amplitude=10pt}]
\tikzstyle{square box}=[rectangle,fill=white,draw=black,minimum height=6mm,minimum width=6mm,yshift=0.7mm]
\tikzstyle{wire label}=[font=\footnotesize, auto,swap]
\tikzstyle{none}=[inner sep=0pt]
\tikzstyle{gn}=[circle,fill=Lime,draw=Black,line width=0.8 pt]
\tikzstyle{rn}=[circle,fill=Red,draw=Black, line width=0.8 pt]
\tikzstyle{H}=[rectangle,fill=Yellow,draw=Black]
\tikzstyle{line}=[scalar,fill=White,draw=Black]
\tikzstyle{io}=[rectangle,fill=White,draw=Black]
\tikzstyle{block}=[rectangle,fill=Orange,draw=Black]
\tikzstyle{graph}=[circle,fill=White,draw=Black]
\tikzstyle{empty}=[rectangle,fill=none,draw=none]
\tikzstyle{scaled}=[rectangle,fill=none,draw=none, font=\small]
\tikzstyle{box}=[rectangle,fill=White,draw=Black]
\tikzstyle{dot}=[circle,fill=Black,draw=Black,inner sep=0pt,minimum size=1pt]
\tikzstyle{small dot}=[circle,fill=Black,draw=Black,inner sep=0pt,minimum size=1pt]
\tikzstyle{Dot}=[circle,fill=Black,draw=Black,inner sep=0pt,minimum size=3pt]
\tikzstyle{diam}=[rectangle,fill=Black,draw,yscale=1.2,rotate=45]
\tikzstyle{gangle}=[rectangle,fill=Lime,draw=Black]
\tikzstyle{rangle}=[rectangle,fill=Red,draw=Black]
\tikzstyle{circ}=[circle,fill=none,draw=Black,scale=1.3]
\tikzstyle{ellip}=[ellipse,fill=none,draw=Black,scale=1.3,minimum width =1.3cm]
\tikzstyle{ellip2}=[ellipse,fill=White,draw=Black,scale=1.3,minimum width =3cm]
\tikzstyle{bbox}=[rectangle,fill=Blue,draw=Blue,scale=0.6]
\tikzstyle{gg}=[shape=rectangle,fill=White,draw=Black,dashed]
\tikzstyle{white circle}=[circle,fill=none,draw=Black,scale=1]
\tikzstyle{black circle}=[circle,fill=Black,draw=Black,scale=1]
\tikzstyle{grey circle}=[circle,fill=Gray,draw=Black,scale=1]
\tikzstyle{white rectangle}=[rectangle,fill=none,draw=Black,scale=1]
\tikzstyle{nodev}=[circle,fill=none,draw=Black,scale=1]
\tikzstyle{greynode}=[circle,fill=Grey,draw=Black,scale=1]
\tikzstyle{blacknode}=[circle,fill=Black,draw=Black,scale=1]
\tikzstyle{wirev}=[circle,fill=Black,draw=Black,inner sep=0pt,minimum size=3pt]
\tikzstyle{wirevred}=[circle,fill=Red,draw=Black,inner sep=0pt,minimum size=3pt]
\tikzstyle{simple}=[-,draw=Black]
\tikzstyle{to}=[->,draw=Black]
\tikzstyle{naturalto}=[-{Implies},double distance=1.5pt]
\tikzstyle{bdirected}=[<->,draw=Black]
\tikzstyle{bothdirs}=[bdirected,draw=Black]
\tikzstyle{bothdirsred}=[bdirected,draw=Red]
\tikzstyle{blue}=[-,draw=Blue]
\tikzstyle{redd}=[directed,draw=Red]
\tikzstyle{redu}=[-,draw=Red]
\tikzstyle{blued}=[directed,draw=Blue]
\tikzstyle{dash}=[dashed,draw=Black]
\tikzstyle{ddash}=[->,dashed,draw=Black]
\tikzstyle{dashedd}=[->,dashed]
\tikzstyle{dashedred}=[dashed,draw=Red]
\tikzstyle{equal-arrow}=[double equal sign distance]
\tikzstyle{dotpic}=[scale=0.5]
\tikzstyle{every picture}=[baseline=-0.25em]
\newcommand{
%\beginpgfgraphicnamed{#1}
\InputIfFileExists{}{}{\input{./tikz/}}
%\endpgfgraphicnamed
}[1]{
%\beginpgfgraphicnamed{#1}
\InputIfFileExists{#1}{}{\input{./tikz/#1}}
%\endpgfgraphicnamed
}
\newcommand{\InputIfFileExists{}{}{\input{./tikz/}}}[1]{\InputIfFileExists{#1}{}{\input{./tikz/#1}}}
\newcommand{\stikz}[2][1]{\scalebox{#1}{
%\beginpgfgraphicnamed{#1}
\InputIfFileExists{#2}{}{\input{./tikz/#2}}
%\endpgfgraphicnamed
}}
\newcommand{\cstikz}[2][1]{\begin{center}\stikz[#1]{#2}\end{center}}
\begin{document}

%% Title information
\title{Semantics for first-order affine inductive data types via slice categories}

\author{Vladimir Zamdzhiev}
\authorrunning{V. Zamdzhiev}

\institute{Universit\'e de Lorraine, CNRS, Inria, LORIA, France}

\maketitle

\begin{abstract}
Affine type systems are substructural type systems where copying of information
is restricted, but discarding of information is permissible at all types. Such
type systems are well-suited for describing quantum programming languages,
because copying of quantum information violates the laws of quantum mechanics.
In this paper, we consider a first-order affine type system with inductive
data types and present a novel categorical semantics for it. The most
challenging aspect of this interpretation comes from the requirement to
construct appropriate discarding maps for our data types which might be defined
by mutual/nested recursion. We show how to achieve this for all types by taking
models of a first-order linear type system whose atomic types are discardable
and then presenting an additional affine interpretation of types within the
slice category of the model with the tensor unit. We present some concrete
categorical models for the language ranging from classical to quantum. Finally,
we discuss potential ways of dualising and extending our methods and using them
for interpreting coalgebraic and lazy data types.
\keywords{Inductive data types \and Categorical Semantics \and Affine Types}
\end{abstract}

\section{Introduction}

Linear Logic~\cite{linear-logic} is a substructural logic where the rules for
weakening and contraction are restricted. Linear logic has been very
influential in computer science and has lead to the development of linear type
systems where discarding and copying of variables is restricted. Linear logic
has also inspired the development of \emph{affine} type systems, which are
substructural type systems where only the rule for contraction (copying of
variables) is restricted, but weakening (discarding of variables) is completely
unrestricted. Affine type systems are a natural choice for quantum programming
languages~\cite{qpl,qpl-fossacs,qlc-affine,quantum-games}, because they can be
used to enforce compliance with the laws of quantum mechanics, where copying of
quantum information is impossible~\cite{no-cloning}.

In this paper we consider a first-order affine type system with inductive
data types, called \lang, and we present a categorical semantics for it. The
main focus of the present paper is on the construction of the required
discarding maps that are necessary for the interpretation of the type system.
Our semantics is novel in that we assume very little structure on the model
side: we do not assume the existence of any (sub)category where the tensor unit
$I$ is a terminal object. Instead, we merely assume that the interpretation of
every \emph{atomic} type is equipped with some discarding map (which is clearly
necessary) and we then show how to construct \emph{all other} discarding maps
by providing an \emph{affine interpretation of types} within the slice category
of the model with the tensor unit. Thus, by taking a categorical model of a
first-order linear type system, we construct all the discarding maps we need by
performing a careful \emph{semantic} analysis, instead of assuming additional
structure within the categorical model.

\paragraph{Outline.}
We begin by recalling some background about parameterised initial algebras in
Section~\ref{sec:back}. Next, we describe the syntax of \lang, which is a
fragment of the quantum programming language QPL~\cite{qpl,qpl-fossacs}, in
Section~\ref{sec:syntax}. In Section~\ref{sec:operational}, we present the
operational semantics of \lang.  One of our main contributions is in
Section~\ref{sec:discarding}, where we show how parameterised initial algebras
for suitable functors may be reflected into slice categories.  Our
contributions continue in Section~\ref{sec:model}, where we describe a
categorical model for our language, and with Section~\ref{sec:semantics},
where we present a novel categorical semantics for the affine structure of
types by providing a non-standard type interpretation within a slice
category.  In Section~\ref{sec:future} we discuss future work and possible
extensions and in Section~\ref{sec:conclude} we discuss related work and
present some concluding remarks.

%%%%%%%%%%%%%%%%%%%%%%%%%%%%%%%%%%%%%%%%%%%%%%%%%%%%%%%%%%%%%%%%%%%%%%%%%%%%%%
\section{Parameterised Initial Algebras}
\label{sec:back}
%%%%%%%%%%%%%%%%%%%%%%%%%%%%%%%%%%%%%%%%%%%%%%%%%%%%%%%%%%%%%%%%%%%%%%%%%%%%%%

Simple inductive data types, like lists and natural numbers, may be interpreted
by initial algebras. However, the interpretation of inductive data types defined
by mutual/nested induction requires a more general notion called
\emph{parameterised initial algebra}, which we shall now recall.

\begin{definition}[cf. {\cite[\S 6.1]{fiore-thesis}}]
  Let $\AAA$ and $\BB$ be categories and $T : \AAA \times \BB \to \BB$ a functor. A \emph{parameterised initial algebra}
  for $T$ is a pair $(T^\dagger, \tau),$ such that:
  \begin{itemize}
    \item $T^\dagger : \AAA \to \BB$ is a functor;
    \item $\tau : T \circ \langle \Id, T^\dagger \rangle \naturalto T^\dagger : \AAA \to \BB$ is a natural isomorphism;
    \item For every $A \in \Ob(\AAA)$, the pair $(T^\dagger A, \tau_A)$ is an initial $T(A, -)$-algebra.
  \end{itemize}
\end{definition}
Note that by trivialising $\AAA$, we get the well-known notion of initial algebra.
Next, we recall a theorem which provides sufficient conditions for the existence of parameterised initial algebras.
\begin{theorem}[{\cite[Theorem 4.12]{lnl-fpc-lmcs}}]
\label{thm:par-exists}
Let $\BB$ be a category with an initial object and all $\omega$-colimits. Let $T: \AAA \times \BB \to \BB$ be an $\omega$-cocontinuous functor. Then
$T$ has a parameterised initial algebra $(T^\dagger, \tau)$ and the functor $T^\dagger$ is also $\omega$-cocontinuous.
\end{theorem}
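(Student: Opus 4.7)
The plan is to build $T^\dagger$ by pointwise Adámek iteration, extend it to a functor using initiality, and then establish $\omega$-cocontinuity via an interchange of $\omega$-colimits. First fix $A \in \Ob(\AAA)$ and note that $T(A, -) : \BB \to \BB$ is $\omega$-cocontinuous, being the composite of $T$ with the inclusion $\{A\} \times \BB \hookrightarrow \AAA \times \BB$. Since $\BB$ has an initial object $\z$ and $\omega$-colimits, Adámek's theorem produces an initial $T(A, -)$-algebra as the colimit of the chain $\z \to T(A, \z) \to T(A, T(A, \z)) \to \cdots$; I take the apex as $T^\dagger A$ and the induced algebra isomorphism as $\tau_A$. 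For a morphism $f : A \to A'$ in $\AAA$, the composite $\tau_{A'} \circ T(f, \id_{T^\dagger A'})$ equips $T^\dagger A'$ with a $T(A, -)$-algebra structure, and $T^\dagger f$ is defined as the unique $T(A,-)$-algebra morphism from $(T^\dagger A, \tau_A)$ into it. Functoriality of $T^\dagger$ and naturality of $\tau$ follow immediately from the uniqueness clauses of initiality.

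For the $\omega$-cocontinuity claim I would make this construction parametric in $A$. Define auxiliary functors $T^{(n)} : \AAA \to \BB$ inductively by $T^{(0)} = \mathrm{const}_\z$ and $T^{(n+1)} = T \circ \langle \Id_\AAA, T^{(n)} \rangle$, together with natural transformations $\iota^{(n)} : T^{(n)} \naturalto T^{(n+1)}$ obtained by whiskering the initial map $\z \to T(-, \z)$ iteratively. A short induction shows that each $T^{(n)}$ is $\omega$-cocontinuous: constant functors preserve colimits of connected diagrams (and $\omega$-chains are connected), while the inductive step uses the assumed $\omega$-cocontinuity of $T$ together with the fact that $\omega$-colimits in $\AAA \times \BB$ are computed componentwise. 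By Adámek's theorem, $T^\dagger$ is the objectwise colimit of the chain $T^{(0)} \naturalto T^{(1)} \naturalto \cdots$. Given any $\omega$-chain in $\AAA$ with colimit $A_\infty = \mathrm{colim}_i A_i$, interchanging the two $\omega$-colimits yields $T^\dagger A_\infty = \mathrm{colim}_n \mathrm{colim}_i T^{(n)} A_i = \mathrm{colim}_i \mathrm{colim}_n T^{(n)} A_i = \mathrm{colim}_i T^\dagger A_i$, which proves that $T^\dagger$ is $\omega$-cocontinuous.

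The main obstacle I anticipate is bookkeeping: one must check that the connecting maps in the chain defining each $T^\dagger A$ vary naturally in $A$, so that each $T^{(n)}$ and each $\iota^{(n)}$ is a genuine natural transformation, and that the objectwise colimit $\mathrm{colim}_n T^{(n)}$ assembles into a functor agreeing with the initiality-based $T^\dagger$ of the first paragraph. Once these coherences are set up, the Fubini-style interchange of $\omega$-colimits is standard, and the remaining verifications are routine consequences of uniqueness from initiality.
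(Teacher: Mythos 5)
Your proposal is correct and follows the standard route: pointwise Ad\'amek iteration for each parameter $A$, functoriality of $T^\dagger$ via initiality, and $\omega$-cocontinuity by exhibiting $T^\dagger$ as the colimit of the $\omega$-cocontinuous approximants $T^{(n)}$ and interchanging the two $\omega$-colimits. The paper itself gives no proof --- it imports the theorem from the cited reference --- and your argument is essentially the one given there, so there is nothing substantive to flag beyond the bookkeeping you already acknowledge.
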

In particular, the above theorem shows that $\omega$-cocontinuous functors are closed under formation of parameterised initial algebras.

%%%%%%%%%%%%%%%%%%%%%%%%%%%%%%%%%%%%%%%%%%%%%%%%%%%%%%%%%%%%%%%%%%%%%%%%%%%%%%
\section{Syntax of \lang}\label{sec:syntax}
%%%%%%%%%%%%%%%%%%%%%%%%%%%%%%%%%%%%%%%%%%%%%%%%%%%%%%%%%%%%%%%%%%%%%%%%%%%%%%

In this section we describe the syntax of \lang, which is the language on which we will base the development of our ideas. \lang{} is a fragment of the quantum programming language QPL~\cite{qpl-fossacs} which is obtained from QPL by removing procedures, quantum resources and copying of classical information.
The reason for considering this fragment is just simplicity and brevity of the presentation.

\begin{remark}
In fact, the methods we describe can handle the addition of
procedures and the copying of non-linear information with no further effort.
The addition of quantum
resources can also be handled by our methods, but this requires identifying a
suitable category of quantum computation with $\omega$-colimits.
\end{remark}
The syntax of \lang{} is summarised in Figure~\ref{fig:syntax}.
A type context $\Theta$, is \emph{well-formed}, denoted $\vdash \Theta$, if $\Theta$ is simply a list of distinct type variables.
Well-formed types, denoted $\Theta \vdash A,$ are specified by the following rules:
\[
{\small{
    % (var)
    \begin{bprooftree}
    \AxiomC{{$\vdash \Theta$}}
    \UnaryInfC{$\Theta \vdash \Theta_i$}
    \end{bprooftree}
    % (I)
    \begin{bprooftree}
    \AxiomC{\phantom{$\Theta \vdash A$}}
    \UnaryInfC{$\Theta \vdash I$}
    \end{bprooftree}
    % (atomic)
    \begin{bprooftree}
    \AxiomC{$\mathbf A$ atomic}
    \UnaryInfC{$\Theta \vdash \mathbf A$}
    \end{bprooftree}
    % (*)
    \begin{bprooftree}
    \AxiomC{$\Theta \vdash A$}
    \AxiomC{$\Theta \vdash B$}
    \RightLabel{$\star \in \{+, \otimes\}$}
    \BinaryInfC{$\Theta \vdash A \star B$}
    \end{bprooftree}
%    % (+)
%    \begin{bprooftree}
%    \AxiomC{$\Theta \vdash A$}
%    \AxiomC{$\Theta \vdash B$}
%    \BinaryInfC{$\Theta \vdash A+B$}
%    \end{bprooftree}
%    % (x)
%    \begin{bprooftree}
%    \AxiomC{$\Theta \vdash A$}
%    \AxiomC{$\Theta \vdash B$}
%    \BinaryInfC{$\Theta \vdash A \otimes B$}
%    \end{bprooftree}
    % (mu)
    \begin{bprooftree}
    \AxiomC{$\Theta, X \vdash A$}
    \UnaryInfC{$\Theta \vdash \mu X.A$}
    \end{bprooftree},
}}
\]
where we assume that there is some set of atomic types $\mathcal A$, which we will leave
unspecified in this paper (for generality). For example, in quantum programming, it
suffices to assume $\mathcal A = \{ \textbf{qubit} \}$. This is the case for QPL.

A type $A$ is \emph{closed} whenever $\cdot \vdash A$. Note that nested type induction (also known as mutual induction) is allowed, i.e., it is possible to form inductive data types which have more than one free variable in their type contexts.
Henceforth, we implicitly assume that all types we are dealing with are well-formed.

%
% Syntax (summarised)
%
\begin{figure}[t]
{\small{
\begin{tabular}{l  l  l  l}
  Type Variables           & $X,Y,Z$ & & \\
  Term Variables           & $x, y, b, u$ & & \\
	Atomic Types             & $\mathbf A \in \mathcal A$ &  &   \\
	Types                    & $A, B, C$ & ::= &  $X$ | $I$ | \textbf{A} | $A+B$ | $A\otimes B$ | $\mu X. A$\\
  Terms                    & $M, N$ & ::= & \newunit\ $u$ | \discard\ $x$ | $  M;N$ | \sskip\  | \\
    & &                                   &  \while{$b$}{$M$} | $x = $ \textbf{left}$_{A,B} M$ | $x$ = \textbf{right}$_{A,B} M$ | \\ 
    & &                                   & \textbf{case} $y$ \textbf{of} $\{$\textbf{left} $x_1\to M\ |$ \textbf{right} $x_2 \to N\}$ | \\ 
    & &                                   & $x = (x_1, x_2) $ | $(x_1, x_2) = x$ | $y$ = \textbf{fold} $x$ | $y$ = \textbf{unfold} $x$ \\
  Type contexts            & $\Theta$ &  ::= & $X_1, X_2, \ldots, X_n $\\
  Variable contexts        & $\Gamma, \Sigma$ &  ::= & $x_1: A_1, \ldots, x_n : A_n$\\
  Type Judgements          & \multicolumn{2}{l}{$\Theta \vdash A$} &\\
  Term Judgements          & \multicolumn{2}{l}{$\vdash \langle \Gamma \rangle\ M\ \langle \Sigma \rangle$} &
\end{tabular}
}}
\caption{Syntax of \lang.}
\label{fig:syntax}
\end{figure}
%
% Formation rules
%
\begin{figure}[t]
{
\small{
  % row 1
  \[
    % (unit)
    \begin{bprooftree}
    \AxiomC{}
    \UnaryInfC{$ \vdash \langle \Gamma \rangle\ \newunit\ u\ \langle \Gamma, u:I \rangle$}
    \end{bprooftree}
    % (discard)
    \begin{bprooftree}
    \AxiomC{}
    \UnaryInfC{$ \vdash \langle \Gamma, x:A \rangle\ \textbf{discard}\ x\ \langle \Gamma \rangle$}
    \end{bprooftree}
  \]

  % row 2
  \[
     % (skip)
    \begin{bprooftree}
    \AxiomC{\phantom{P}}
    \UnaryInfC{$ \vdash \langle \Gamma \rangle\ \sskip\ \langle \Gamma \rangle$}
    \end{bprooftree}
  \]
  
  \[
   %  \qquad   
    % (seq)
    \begin{bprooftree}
    \AxiomC{$ \vdash \langle \Gamma \rangle\ M\ \langle \Gamma' \rangle$}
    \AxiomC{$ \vdash \langle \Gamma' \rangle\ N\ \langle \Sigma \rangle$}
    \BinaryInfC{$ \vdash \langle \Gamma \rangle\ M;N\ \langle \Sigma \rangle$}
    \end{bprooftree}
    % (while)
    \begin{bprooftree}
    \AxiomC{$ \vdash \langle \Gamma, b: \textbf{bit} \rangle\ M\ \langle  \Gamma, b: \textbf{bit} \rangle$}
    \UnaryInfC{$ \vdash \langle \Gamma, b: \textbf{bit} \rangle\ 
      \while{$b$}{$M$}\ \langle \Gamma, b: \textbf{bit} \rangle$}
    \end{bprooftree}
   \]

   \[
    %\qquad
    % (left)
    \begin{bprooftree}
    \AxiomC{}
    \UnaryInfC{$ \vdash \langle \Gamma, x:A \rangle\ y = \lleft_{A,B}\ x\ \langle \Gamma, y: A+B \rangle$}
    \end{bprooftree}
    %\qquad
    % (right)
    \begin{bprooftree}
    \AxiomC{}
    \UnaryInfC{$ \vdash \langle \Gamma, x:B \rangle\ y = \rright_{A,B}\ x\ \langle \Gamma, y: A+B \rangle$}
    \end{bprooftree}
  \]

  % row 4
  \[
    % (case)
    \begin{bprooftree}
    \AxiomC{$ \vdash \langle \Gamma, x_1: A \rangle\ M_1\ \langle \Sigma \rangle$}
    \AxiomC{$ \vdash \langle \Gamma, x_2: B \rangle\ M_2\ \langle \Sigma \rangle$}
    \BinaryInfC{$ \vdash \langle \Gamma, y: A+B \rangle\ 
      \ccase\ y\ \textbf{of}\ \{\lleft_{A,B}\ x_1 \to M_1\ |\ \rright_{A,B}\ x_2 \to M_2\ \}\ \langle \Sigma \rangle$}
    \end{bprooftree}
    \qquad\ \ 
  \]

  % row 6
  \[
    % (pair)
    \begin{bprooftree}
    \AxiomC{}
    \UnaryInfC{$ \vdash \langle \Gamma, x_1: A, x_2 : B \rangle\ x = (x_1, x_2)\ \langle \Gamma, x: A \otimes B \rangle$}
    \end{bprooftree}
  \]

  \[
  % (unpair)
    \begin{bprooftree}
    \AxiomC{}
    \UnaryInfC{$ \vdash \langle \Gamma, x: A \otimes B \rangle\ (x_1, x_2) = x \ \langle \Gamma, x_1: A, x_2 : B \rangle$}
    \end{bprooftree}
  \]

  \[
    % (fold)
    \begin{bprooftree}
    \AxiomC{}
    \UnaryInfC{$ \vdash \langle \Gamma, x: A[\mu X. A / X] \rangle\ y = \textbf{fold}_{\mu X. A}\ x\ \langle \Gamma, y: \mu X.A \rangle$}
    \end{bprooftree}
  \]

  \[
    % (unfold)
    \begin{bprooftree}
    \AxiomC{}
    \UnaryInfC{$ \vdash \langle \Gamma, x: \mu X. A \rangle\ y = \textbf{unfold}\ x\ \langle \Gamma, y: A[\mu X. A / X] \rangle$}
    \end{bprooftree}
  \]
  }
}
\caption{Formation rules for \lang{} terms.}
\label{fig:term-formation}
\end{figure}

\begin{example}
\label{ex:syntax}
Natural numbers can be defined as $\textbf{Nat} \equiv \mu X. I + X.$
A list of a closed type $\cdot \vdash A$ is defined by $\textbf{List}(A) \equiv \mu Y. I + A \otimes Y.$
\end{example}
\emph{Term variables} are denoted by small Latin characters (e.g. $x,y,u,b$).
In particular, $u$ ranges over
variables of unit type $I$, $b$ over variables of type $\textbf{bit} \coloneqq
I+I$ and $x,y$ range over arbitrary variables. \emph{Variable contexts} are denoted by capital Greek letters, such as $\Gamma$ and
$\Sigma$. Variable contexts contain only variables of closed types and are written as $\Gamma = x_1: A_1, \ldots, x_n:A_n.$

A \emph{term judgement} $\vdash \langle \Gamma \rangle\ M\ \langle \Sigma \rangle$ 
indicates that term $M$ is well-formed assuming an
input variable context $\Gamma$ and an output variable context $\Sigma$. All types within it are necessarily closed. 
The formation rules are shown in Figure~\ref{fig:term-formation}.

\begin{remark}
Because we are not concerned with any domain-specific applications in this paper, we leave the atomic types uninhabited.
Of course, any domain-specific extension should add suitable introduction and elimination rules for each atomic type.
In the case of QPL, the term language has to be extended with three terms -- one each for preparing a qubit in state $\ket 0$, 
applying a unitary gate to a term and finally measuring a qubit. See~\cite{qpl-fossacs} for more information.
\end{remark}

%%%%%%%%%%%%%%%%%%%%%%%%%%%%%%%%%%%%%%%%%%%%%%%%%%%%%%%%%%%%%%%%%%%%%%%%%%%%%%
\section{Operational Semantics of \lang}\label{sec:operational}
%%%%%%%%%%%%%%%%%%%%%%%%%%%%%%%%%%%%%%%%%%%%%%%%%%%%%%%%%%%%%%%%%%%%%%%%%%%%%%

The purpose of this section is to present the operational semantics of \lang{}.
We begin by introducing \emph{program configurations} which completely
and formally describe the current state of program execution. A program configuration
is a pair $(M\ |\ V)$, where $M$ is the term which remains to be executed and $V$ is a \emph{value assignment}, which
is a function that assigns values to variables that have already been introduced.

\paragraph{Value Assignments.}
\emph{Values} are expressions defined by the following grammar:
\[v, w ::= *\ |\ \lleft_{A,B} v\ |\ \rright_{A,B} v\ |\ (v,w)\ |\ \ffold_{\mu X.A} v . \]
The expression $*$ represents the unique value of unit type $I$.
Other particular values of interest are the canonical values of type \textbf{bit}, called \emph{false} and \emph{true}, which are formally defined by $\texttt{ff}:=
\textbf{left}_{I,I} *$ and $\texttt{tt}:= \textbf{right}_{I,I} *.$
They play an important role in the operational semantics.

The well-formed values, denoted $\vdash v : A$, are specified by the following rules:
  % row 1
  {\small{
  \[
    \begin{bprooftree}
    \AxiomC{\phantom{$Q$}}
    \RightLabel{} \UnaryInfC{$\vdash *: I$}
    \end{bprooftree}
    % (left)
    \begin{bprooftree}
    \AxiomC{$\vdash v : A$}
    \UnaryInfC{$\vdash \lleft_{A,B} v : A+B$}
    \end{bprooftree}
    % (right)
    \begin{bprooftree}
    \AxiomC{$\vdash v : B$}
    \UnaryInfC{$\vdash \rright_{A,B} v : A+B$}
    \end{bprooftree}
  \]
  \[
    % (pair)
    \begin{bprooftree}
    \AxiomC{$ \vdash v : A$}
    \AxiomC{$ \vdash w : B$}
    \BinaryInfC{$ \vdash (v, w) : A \otimes B$}
    \end{bprooftree}
    % (fold)
    \begin{bprooftree}
    \AxiomC{$ \vdash v : A[\mu X. A / X]$}
    \UnaryInfC{$ \vdash \fold_{\mu X.A} v : \mu X. A$}
    \end{bprooftree}
  \]}}
A \emph{value assignment} is simply a function from term variables to values.
We write value assignments as $V =\{ x_1 = v_1, \ldots, x_n = v_n \},$
where each $x_i$ is a variable and each $v_i$ is a value. We say that
$V$ is \emph{well-formed} in variable context $\Gamma = \{ x_1 : A_1, \ldots x_n : A_n \}$, denoted $\Gamma \vdash V,$ if $V$ has the same variables as $\Gamma$
and $\vdash v_i : A_i,$ for each $i \in \{ 1, \ldots n \}$.

\paragraph{Program configurations.}
A \emph{program configuration} is a couple $(M\ |\ V),$ where $M$ is a term and where $V$ is a value assignment.
A \emph{well-formed} program configuration, denoted $\Gamma; \Sigma \vdash (M \ |\ V)$, is a program configuration $(M\ |\ V)$, such that there exist (necessarily unique) $\Gamma, \Sigma$ with:
(1) $\vdash \langle \Gamma \rangle\ M\ \langle \Sigma \rangle$ is a well-formed term; and
(2) $ \Gamma \vdash V$ is a well-formed value assignment.
\begin{figure}[t]
\small{
  \[
    % (unit)
    \begin{bprooftree}
    \AxiomC{}
    \UnaryInfC{$
      (\newunit\ u\ |\ V)
      \leadsto
      (\sskip\ |\ V, u=*)
    $}
    \end{bprooftree}
  \]
  \vspace{0.5mm}
  \[
    % (discard)
    \begin{bprooftree}
    \AxiomC{}
    \UnaryInfC{$
      (\textbf{discard}\ x\ |\ V, x=v)
      \leadsto
      (\sskip\ |\ V)
    $}
    \end{bprooftree}
  \]
  \vspace{0.5mm}
  \[
    % (seq1)
    \begin{bprooftree}
    \AxiomC{}
    \UnaryInfC{$(\sskip;P\ |\ V) \leadsto (P\ |\ V)$}
    \end{bprooftree}
    \qquad
  \]
  \vspace{0.5mm}
  \[
    % (seq2)
    \begin{bprooftree}
    \AxiomC{$(P\ |\ V) \leadsto (P'\ |\ V')$}
    \UnaryInfC{$(P;Q\ |\ V) \leadsto (P';Q\ |\ V')$}
    \end{bprooftree}
    \qquad
  \]
  \vspace{0.5mm}
  \[
    % (while)
    \begin{bprooftree}
    \AxiomC{}
    \UnaryInfC{$ 
      (\while{$b$}{$M$}\ |\ V, b = v)
      \leadsto
      (\textbf{if}\ b\ \textbf{then}\ \{M;\while{$b$}{$M$}\}%\ \textbf{else}\ \sskip
 \ |\ V, b = v )
    $}
    \end{bprooftree}
  \]
  \vspace{0.5mm}
  \[
    % (left)
    \begin{bprooftree}
    \AxiomC{}
    \UnaryInfC{$
      (y = \lleft\ x\ |\ V, x = v )
      \leadsto
      (\sskip\ |\ V, y =\lleft\ v)
    $}
    \end{bprooftree}
  \]
  \vspace{0.5mm}
  \[
    % (right)
    \begin{bprooftree}
    \AxiomC{}
    \UnaryInfC{$
      (y = \rright\ x\ |\ V, x = v )
      \leadsto
      (\sskip\ |\ V, y =\rright\ v )
    $}
    \end{bprooftree}
  \]
  \vspace{0.5mm}
  \[
    % (case)
    \begin{bprooftree}
    \AxiomC{}
    \UnaryInfC{$
      (\ccase\ y\ \textbf{of}\ \{\lleft\ x_1 \to M_1\ |\ \rright\ x_2 \to M_2\ \}\ |\ V, y=\lleft\ v)
      \leadsto
      (M_1\ |\ V, x_1=v)
    $}
    \end{bprooftree}
    \qquad\ \ 
  \]
  \vspace{0.5mm}
  \[
    % (case)
    \begin{bprooftree}
    \AxiomC{}
    \UnaryInfC{$
      (\ccase\ y\ \textbf{of}\ \{\lleft\ x_1 \to M_1\ |\ \rright\ x_2 \to M_2\ \}\ |\ V, y=\rright\ v)
      \leadsto
      (M_2\ |\ V, x_2=v)
    $}
    \end{bprooftree}
    \qquad\ \ 
  \]
  \vspace{0.5mm}
  % row 6
  \[
    % (pair)
    \begin{bprooftree}
    \AxiomC{}
    \UnaryInfC{$
    (x = (x_1, x_2)\ |\ V, x_1 = v_1, x_2 = v_2 )
    \leadsto
    (\sskip\ |\ V, x=(v_1,v_2))
    $}
    \end{bprooftree}
  \]
  \vspace{0.5mm}
  \[
  % (unpair)
    \begin{bprooftree}
    \AxiomC{}
    \UnaryInfC{$
    ((x_1, x_2) = x\ |\ V, x=(v_1,v_2))
    \leadsto
    (\sskip\ |\ V, x_1 = v_1, x_2 = v_2)
    $}
    \end{bprooftree}
  \]
  \vspace{0.5mm}
  \[
    % (fold)
    \begin{bprooftree}
    \AxiomC{}
    \UnaryInfC{$
      (y = \textbf{fold}\ x\ |\ V, x = v)
      \leadsto
      (\sskip\ |\ V, y =\textbf{fold}\ v)
    $}
    \end{bprooftree}
  \]
  \vspace{0.5mm}
  \[
    % (unfold)
    \begin{bprooftree}
    \AxiomC{}
    \UnaryInfC{$
      (y = \textbf{unfold}\ x\ |\ V, x = \textbf{fold}\ v)
      \leadsto
      (\sskip\ |\ V, y = v)
    $}
    \end{bprooftree}
  \]
}
\caption{Small Step Operational semantics of \lang{}.}
\label{fig:operational-small-step}
\end{figure}

The (small step) operational semantics is defined as a function $(- \leadsto - )$ on program configurations
$(M\ |\ V)$ by induction on the structure of $M$ in Figure~\ref{fig:operational-small-step}.
Note that, in the rule for while loops, the term $\textbf{if}\ b\ \textbf{then}\ \{M\} $ is just syntactic sugar for $\ccase\ b\ \textbf{of}\ \{\lleft\ u \to b = \lleft\ u\ |\ \rright\ u \to b = \rright\ u;M\ \}$. 

\begin{theorem}[Subject reduction \cite{qpl-fossacs}]
\label{thm:subject}
If $\Gamma; \Sigma \vdash (M\ |\ V)$ and $(M\ |\ V) \leadsto (M', V')$,
then $\Gamma'; \Sigma \vdash (M', V')$, for some (necessarily unique) context $\Gamma'$.
\end{theorem}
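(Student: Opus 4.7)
The plan is to proceed by induction on the derivation of the reduction $(M\ |\ V) \leadsto (M'\ |\ V')$, which in practice is a case analysis on the last reduction rule applied in Figure~\ref{fig:operational-small-step} (with one inductive call, in the congruence rule for sequential composition). In each case, the syntax-directedness of the term-formation rules of Figure~\ref{fig:term-formation} lets us invert $\vdash \langle \Gamma \rangle\ M\ \langle \Sigma \rangle$ to pin down an intermediate context $\Gamma'$; we then verify both $\vdash \langle \Gamma' \rangle\ M'\ \langle \Sigma \rangle$ and $\Gamma' \vdash V'$. Uniqueness of $\Gamma'$ is forced by the uniqueness of term-formation derivations together with the fact that $\Gamma'$ must coincide with the output context of $M'$.

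The axiomatic cases split naturally. The value-producing reductions ($\newunit$, the \lleft{}/\rright{} injections, pairing, and fold) replace $M$ by $\sskip$ and extend $V$ by one binding $y = v$; here $\Gamma' = \Sigma$, and $\Gamma' \vdash V'$ follows from the corresponding value-formation rule (for example, $\vdash * : I$ for $\newunit$, or $\vdash (v_1,v_2):A\otimes B$ from $\vdash v_1 : A$ and $\vdash v_2 : B$ for pairing). The value-consuming reductions ($\discard$, unpairing, unfold, and the two \ccase{} reductions) either drop an entry of $V$ or replace it by smaller entries; here inversion on the well-formed stored value (e.g.\ $\vdash \lleft_{A,B} v : A+B$ forces $\vdash v:A$) recovers $\Gamma' \vdash V'$, while $\vdash \langle \Gamma' \rangle\ M'\ \langle \Sigma \rangle$ is already a premise of the typing derivation that justified $M$.

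The two structural cases need slightly more care. The reduction $(\sskip;P\ |\ V) \leadsto (P\ |\ V)$ is immediate from inversion on the sequencing rule with $\Gamma' = \Gamma$. For $(P;Q\ |\ V) \leadsto (P';Q\ |\ V')$ we apply the induction hypothesis to the premise $(P\ |\ V)\leadsto(P'\ |\ V')$ to obtain a $\Gamma'$ with $\vdash \langle \Gamma' \rangle\ P'\ \langle \Gamma'' \rangle$ and $\Gamma' \vdash V'$, then re-apply sequencing with the unchanged $\vdash \langle \Gamma'' \rangle\ Q\ \langle \Sigma \rangle$. The while reduction is handled by observing that its right-hand side is syntactic sugar for a \ccase{} analysis on $b : \bit$ whose branches either terminate the loop or execute $M$ followed by the loop itself; this is typeable in $\Gamma, b:\bit$ by the standard desugaring, using the loop's own formation rule to type the recursive occurrence.

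The main obstacle, such as it is, is purely bookkeeping: one must make sure that the fresh variables reintroduced in the value-consuming cases do not clash with the variables already in $\Gamma$, which is guaranteed by the freshness side-conditions implicit in the formation rules. Beyond that, the argument is a routine syntactic induction of the kind standard for imperative languages with inductive data types, and indeed the result is cited directly from~\cite{qpl-fossacs}.
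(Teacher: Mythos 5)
The paper gives no proof of this theorem: it is imported verbatim from the cited QPL reference, so there is nothing internal to compare against. Your sketch — induction on the derivation of $(M\ |\ V) \leadsto (M'\ |\ V')$, with inversion on the term-formation and value-formation rules in each case and a single inductive appeal in the sequencing congruence — is the standard argument for this kind of result and is sound; the only imprecision is that for \discard, unpairing and \unfold{} the residual term is \sskip{} (typed trivially with $\Gamma' = \Sigma$) rather than a premise of the original typing derivation, which only applies to the two \ccase{} reductions.
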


\begin{assumption}
Henceforth, all configurations are assumed to be well-formed.
\end{assumption}
We shall use calligraphic letters $(\mathcal C, \mathcal D, \ldots)$ to denote configurations.
A \emph{terminal} configuration is a configuration $\mathcal C$, such that $\mathcal C = (\sskip, V)$.
\begin{theorem}[Progress \cite{qpl-fossacs}]
\label{thm:progress}
If $\mathcal C$ is a configuration, then either $\mathcal C$ is terminal or there exists a configuration $\mathcal D$, such that 
$\mathcal C \leadsto \mathcal D.$
\end{theorem}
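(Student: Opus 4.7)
The plan is structural induction on the term $M$ of the configuration $\mathcal{C} = (M \mid V)$, exploiting the well-formedness judgement $\Gamma; \Sigma \vdash (M \mid V)$ to invert both the term typing derivation and the value assignment $\Gamma \vdash V$. Before launching the induction I would record a routine canonical forms lemma for values: if $\vdash v : A$ with $A$ closed, then the outermost constructor of $A$ determines the shape of $v$. Explicitly, $A = I$ forces $v = *$; $A = A_1 + A_2$ forces $v = \lleft v'$ or $v = \rright v'$; $A = A_1 \otimes A_2$ forces $v = (v_1, v_2)$; and $A = \mu X.B$ forces $v = \ffold v'$. Each clause is immediate from the value formation rules, since every type constructor is introduced by exactly one such rule.

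With canonical forms in hand, I proceed by cases on $M$. The case $M = \sskip$ yields a terminal configuration. The introduction forms \newunit, the \lleft / \rright{} injections, pair formation and \ffold{} reduce in one step, using $\Gamma \vdash V$ to ensure that the free variables required by the rule are already assigned in $V$. The eliminators \discard, pair destructuring, \ccase, \unfold, and the guard of the \textbf{while} loop each depend on the value of a specific variable $x$; the well-formedness of $V$ guarantees that $V(x)$ exists, and the canonical forms lemma pins down its shape to one that matches some small-step rule (for instance, $V(y)$ of type $A+B$ must be $\lleft v'$ or $\rright v'$, selecting one of the two \ccase{} rules, and $V(x)$ of type $\mu X.B$ must be $\ffold v'$, enabling \unfold). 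The \textbf{while} construct unrolls unconditionally to its syntactic-sugar expansion. Finally, for $M = P; Q$ with $P \neq \sskip$, inverting the sequencing typing rule shows that $(P \mid V)$ is itself a well-formed configuration, so the induction hypothesis yields $(P \mid V) \leadsto (P' \mid V')$, whence $(P;Q \mid V) \leadsto (P';Q \mid V')$; if $P = \sskip$, the dedicated rule $(\sskip;Q \mid V) \leadsto (Q \mid V)$ fires directly.

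The only genuinely nontrivial step is the sequencing case, where one must justify that the subconfiguration $(P \mid V)$ is well-formed so that the induction hypothesis applies; this is routine but worth spelling out, since the subject reduction theorem is needed in the obvious downstream uses. A secondary remark: as atomic types are deliberately left uninhabited by the value grammar in this paper, any $\Gamma$ mentioning a variable of atomic type admits no well-formed $V$ at all, so the corresponding sub-cases are vacuously discharged; adding domain-specific introduction and elimination rules (as with QPL's qubit) would simply contribute further canonical forms and further small-step rules, handled uniformly by the same case analysis.
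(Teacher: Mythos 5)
The paper does not prove this theorem itself --- it imports it from \cite{qpl-fossacs} --- so there is no in-paper proof to compare against; your argument is exactly the standard one such a proof takes. The structural induction on $M$ backed by a canonical-forms lemma for values is correct and complete in outline, including the two genuinely load-bearing points: that well-formedness of $\Gamma \vdash V$ plus canonical forms guarantees each eliminator's premise is met, and that in the sequencing case $(P \mid V)$ is itself well-formed (by inverting the typing rule, not by subject reduction) so the induction hypothesis applies.
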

\begin{remark}
Any domain-specific extension should, of course, also adapt the operational
semantics as necessary. In the case of QPL, this requires introducing new
reduction rules for the additional terms and extending the notion of
configuration with an extra component that stores the quantum data.
\end{remark}

%%%%%%%%%%%%%%%%%%%%%%%%%%%%%%%%%%%%%%%%%%%%%%%%%%%%%%%%%%%%%%%%%%%%%%%%%%%%%%
\section{Slice Categories for Affine Types}\label{sec:discarding}
%%%%%%%%%%%%%%%%%%%%%%%%%%%%%%%%%%%%%%%%%%%%%%%%%%%%%%%%%%%%%%%%%%%%%%%%%%%%%%

Our type system is affine and in order to provide a denotational interpretation
we have to construct discarding maps in our model at every type. This is
achieved in the following way: (1) for every closed type $A$ we provide
a standard interpretation $\lrb A \in \text{Ob}(\CC)$ in our model $\CC$;
(2) in addition, we provide a \emph{affine} type interpretation $\elrb A \in \text{Ob}(\CC/I)$ 
within the \emph{slice category with the tensor unit}, that is, for every type we carefully pick out a specific
discarding map; (3) we prove $\elrb A = (\lrb A, \diamond_A : \lrb A \to I)$ and show that our choice of discarding map $\diamond_A$ can discard all values of our language, as required.

The purpose of this section is to show the slice category $\CC/I$ has
sufficient categorical structure for the affine interpretation of types. Our
analysis is quite general and works for many affine scenarios.
Under some basic assumptions on $\CC$ we show
that $\CC/I$ inherits from $\CC$: a symmetric monoidal structure
(Proposition~\ref{prop:affine-monoidal}), finite coproducts
(Proposition~\ref{prop:affine-coproducts}) and (parameterised) initial algebras for a sufficiently
large class of functors (Theorem~\ref{thm:par-initial-algebras-same}).
\begin{assumption}
\label{assume:slice}
Throughout the remainder of the section we assume we are given a category $\CC$ and we fix an object $I \in \Ob(\CC)$.
Let $\CCc \coloneqq \CC / I$ be the slice category of $\CC$ with the fixed object $I.$
\end{assumption}
Thus, the objects of $\CCc$ are pairs $(A, \diamond_A),$ where $A \in \text{Ob}(\CC)$
and $\diamond_A : A \to I$ is a morphism of $\CC$.
Then, a morphism $f : (A, \diamond_A) \to (B, \diamond_B)$ of $\CCc$
is a morphism $f : A \to B$ of $\CC$, such that $\diamond_B \circ f = \diamond_A.$ Composition and identities are the same as in $\CC$.
We refer to the maps $\diamond_A$ as the \emph{discarding} maps and to the morphisms of $\CCc$ as \emph{affine} maps.
\begin{notation}
There exists an obvious forgetful functor $U : \CCc \to \CC$ given by $U(A, \diamond_A) = A$ and $U(f) = f$.
\end{notation}
The following (well-known) proposition will be used to show the existence of certain initial algebras in $\CCc$. 
For completeness, we provide a proof.
\begin{proposition}\label{prop:reflect}
The functor $U: \CCc \to \CC$ reflects small colimits.
\end{proposition}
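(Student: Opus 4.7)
The plan is to unpack the definition directly: given a small diagram $D : \JJ \to \CCc$ and a cocone $\kappa = (c_j : D_j \to (X, \diamond_X))_{j \in \Ob(\JJ)}$ in $\CCc$ whose image $U\kappa = (Uc_j : UD_j \to X)_j$ is a colimit cocone in $\CC$, I want to show $\kappa$ itself is a colimit cocone in $\CCc$.

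First, I would take an arbitrary competing cocone $\lambda = (d_j : D_j \to (Y, \diamond_Y))_j$ in $\CCc$. Applying $U$ yields a cocone $U\lambda$ in $\CC$ over $UD$, so the universal property of $U\kappa$ produces a unique $h : X \to Y$ in $\CC$ with $h \circ Uc_j = Ud_j$ for all $j$. The content of the proof is therefore to check that $h$ is in fact a morphism of $\CCc$, i.e., that $\diamond_Y \circ h = \diamond_X$.

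The key trick is to observe that the discarding maps of the objects in the diagram assemble into a cocone in $\CC$ with apex $I$. Explicitly, for each $j$, write $D_j = (UD_j, \diamond_{D_j})$, and note that for every $\alpha : i \to j$ in $\JJ$ the morphism $D(\alpha)$ lives in $\CCc$, which forces $\diamond_{D_j} \circ UD(\alpha) = \diamond_{D_i}$. Thus $(\diamond_{D_j} : UD_j \to I)_j$ is a cocone in $\CC$. Now both $\diamond_X : X \to I$ and the composite $\diamond_Y \circ h : X \to I$ mediate this cocone: for $\diamond_X$ this is exactly the condition $\diamond_X \circ Uc_j = \diamond_{D_j}$ that makes each $c_j$ an arrow of $\CCc$; for $\diamond_Y \circ h$ one computes $\diamond_Y \circ h \circ Uc_j = \diamond_Y \circ Ud_j = \diamond_{D_j}$, using that each $d_j$ is an arrow of $\CCc$. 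Uniqueness in the universal property of the colimit $X$ in $\CC$ then forces $\diamond_Y \circ h = \diamond_X$, so $h$ promotes to a morphism $(X, \diamond_X) \to (Y, \diamond_Y)$ in $\CCc$.

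Finally, uniqueness of this mediating morphism in $\CCc$ is immediate from its uniqueness in $\CC$, since $U$ is faithful. I do not anticipate a serious obstacle; the only subtlety worth flagging is spotting the auxiliary cocone $(\diamond_{D_j})_j$ with apex $I$ whose uniqueness of factorization is what pins down $\diamond_Y \circ h = \diamond_X$.
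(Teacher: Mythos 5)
Your proposal is correct and follows essentially the same route as the paper: both identify the auxiliary cocone of discarding maps with apex $I$ and use uniqueness of the mediating morphism out of the colimit in $\CC$ to conclude $\diamond_Y \circ h = \diamond_X$, then get uniqueness of the cocone morphism from faithfulness of $U$. No gaps.
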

\begin{proof}
In Appendix~\ref{proof:reflect}.
\qed\end{proof}
Next, we show how a symmetric monoidal structure on $\CC$ induces one on $\CCc$.
\begin{proposition}\label{prop:affine-monoidal}
Assume that $\CC$ is equipped with a (symmetric) monoidal structure
$(\CC, \otimes, I, \alpha, \lambda, \rho, (\sigma))$. Then,
the tuple $(\CCc, \otimes_a, (I, \id_I), \alpha_a, \lambda_a, \rho_a, (\sigma_a))$ is a (symmetric) monoidal category, where
$\otimes_a : \CCc \times \CCc \to \CCc$ is defined by:
\begin{align*}
(A, \diamond_A) \otimes_a (B, \diamond_B) &\coloneqq (A \otimes B, \lambda_I \circ (\diamond_A \otimes \diamond_B) ) \\
f \otimes_a g &\coloneqq f \otimes g
\end{align*}
and where the natural isomorphisms $\alpha_a, \lambda_a, \rho_a, (\sigma_a)$ are componentwise equal to $\alpha, \lambda, \rho, (\sigma)$ respectively.
Moreover, this data makes $U: \CCc \to \CC$ a strict monoidal functor and we also have:
\[ \otimes \circ (U \times U ) = U \circ \otimes_a  : \CCc \times \CCc \to \CC . \]
\end{proposition}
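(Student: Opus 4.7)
The plan is to verify four things in sequence: (a) that the formula for $\otimes_a$ on morphisms yields well-defined affine maps, so $\otimes_a$ is a bifunctor $\CCc \times \CCc \to \CCc$; (b) that each of the chosen coherence isomorphisms $\alpha, \lambda, \rho$ (and $\sigma$ in the symmetric case) is actually an affine map between the claimed affine objects, so that each component lifts along $U$; (c) that the monoidal (or symmetric monoidal) coherence axioms --- triangle, pentagon, and hexagon --- hold in $\CCc$; and (d) the strictness and compatibility claims about $U$. Item (c) will be automatic once (a) and (b) are established, since composition and identities in $\CCc$ agree with those of $\CC$, all coherence morphisms have the same underlying arrow in $\CC$, and the axioms already hold there.

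For (a), I would check that given affine maps $f : (A, \diamond_A) \to (A', \diamond_{A'})$ and $g : (B, \diamond_B) \to (B', \diamond_{B'})$, the map $f \otimes g$ is affine with respect to the prescribed tensor discarding, i.e.,
\[ \lambda_I \circ (\diamond_{A'} \otimes \diamond_{B'}) \circ (f \otimes g) = \lambda_I \circ (\diamond_A \otimes \diamond_B), \]
which is immediate from bifunctoriality of $\otimes$ and the affineness of $f, g$. Preservation of identities and composition by $\otimes_a$ is then inherited from $\otimes$. Claim (d) is essentially by construction: $U$ sends $\otimes_a$ to $\otimes$ and $(I, \id_I)$ to $I$ on the nose, and takes the coherence isomorphisms of $\CCc$ to those of $\CC$ by definition.

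The bulk of the work is (b), and I expect this to be the main point where care is required. For $\lambda$, the affineness condition $\diamond_A \circ \lambda_A = \lambda_I \circ (\id_I \otimes \diamond_A)$ is exactly the naturality of $\lambda$ applied to $\diamond_A$. For $\rho$, naturality gives $\diamond_A \circ \rho_A = \rho_I \circ (\diamond_A \otimes \id_I)$, and the standard coherence identity $\lambda_I = \rho_I$ closes the case. For the associator, the required equality, after applying naturality of $\alpha$ to $\diamond_A \otimes \diamond_B \otimes \diamond_C$, reduces to
\[ \lambda_I \circ (\id_I \otimes \lambda_I) \circ \alpha_{I,I,I} = \lambda_I \circ (\lambda_I \otimes \id_I), \]
which follows from the triangle identity $(\id_I \otimes \lambda_I) \circ \alpha_{I,I,I} = \rho_I \otimes \id_I$ together with $\lambda_I = \rho_I$. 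For the symmetry $\sigma$, naturality reduces the problem to $\lambda_I \circ \sigma_{I,I} = \lambda_I$, a standard coherence consequence. The main obstacle throughout is bookkeeping: each verification is a routine coherence argument, but one must carefully invoke exactly the right combination of naturality squares and monoidal coherence laws of $(\CC, \otimes, I, \alpha, \lambda, \rho, (\sigma))$ to reduce every equation to one already valid in $\CC$.
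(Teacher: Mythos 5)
Your proposal is correct and follows exactly the direct verification that the paper elides with the words ``straightforward verification'': you correctly identify that the only substantive checks are that $f \otimes g$ respects the tensor discarding maps and that each coherence isomorphism is affine, with the latter reducing via naturality to the unit-object coherence identities $\lambda_I = \rho_I$, the triangle at $(I,I)$, and $\lambda_I \circ \sigma_{I,I} = \lambda_I$. All of these reductions are accurate, and your observation that the coherence axioms in $\CCc$ then hold automatically because $U$ is faithful and identity-on-underlying-arrows completes the argument.
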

\begin{proof}
Straightforward verification.
\qed\end{proof}
Next, we show how coproducts on $\CC$ induce coproducts on $\CCc$.
\begin{proposition}\label{prop:affine-coproducts}
Assume that $\CC$ has finite coproducts with initial object denoted $\varnothing$ and binary coproducts by $(A+B, \text{left}_{A,B}, \text{right}_{A,B}).$
Then, the category $\CCc$ has finite coproducts. Its initial object is $(\varnothing, \perp_{\varnothing, I})$ and binary coproducts are given by $(A, \diamond_A) +_a (B, \diamond_B) \coloneqq \left( A+B, [\diamond_A, \diamond_B] \right)$.
Moreover, we have: 
\[ + \circ (U \times U) = U \circ +_a  : \CCc \times \CCc \to \CC. \]
\end{proposition}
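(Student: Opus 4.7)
The plan is to lean on Proposition~\ref{prop:reflect} and recognize that both the initial object and binary coproducts are colimits of small (in fact, empty and two-object discrete) diagrams. The strategy is therefore: exhibit, in each case, a cocone in $\CCc$ whose image under $U$ is already a colimit in $\CC$; Proposition~\ref{prop:reflect} then promotes it to a colimit in $\CCc$. There is no genuine obstacle here, only some routine checking that the proposed structure maps are affine; the main subtlety is that the discarding map on the coproduct must be chosen so that the canonical coproduct injections already commute with it.

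For the initial object, I would take $(\varnothing, \perp_{\varnothing,I})$, where $\perp_{\varnothing,I}$ is forced by initiality of $\varnothing$ in $\CC$. Under $U$ this yields $\varnothing$, which is initial in $\CC$. The empty cocone in $\CCc$ over this object thus maps to a colimit in $\CC$, and by Proposition~\ref{prop:reflect} it is initial in $\CCc$. (Alternatively, one can check directly: any two morphisms $\varnothing \to A$ in $\CC$ are equal by initiality, and the unique candidate $\perp_{\varnothing, A} : \varnothing \to A$ is automatically affine because $\diamond_A \circ \perp_{\varnothing,A}$ and $\perp_{\varnothing,I}$ are both morphisms $\varnothing \to I$.)

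For binary coproducts, given $(A, \diamond_A)$ and $(B, \diamond_B)$, I would propose the cocone $\bigl((A{+}B,\, [\diamond_A, \diamond_B]);\, \mathrm{left}_{A,B},\, \mathrm{right}_{A,B}\bigr)$. The key check is that the injections are affine morphisms, which is immediate from the defining equations of the copairing:
\[
  [\diamond_A, \diamond_B] \circ \mathrm{left}_{A,B} = \diamond_A, \qquad
  [\diamond_A, \diamond_B] \circ \mathrm{right}_{A,B} = \diamond_B.
\]
Applying $U$ to this cocone recovers the standard coproduct cocone in $\CC$, so by Proposition~\ref{prop:reflect} it is a coproduct in $\CCc$. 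As a byproduct, the copairing of affine maps $f : (A, \diamond_A) \to (C, \diamond_C)$ and $g : (B, \diamond_B) \to (C, \diamond_C)$ is simply $[f,g]$, which is affine by a one-line coproduct-uniqueness argument (both $\diamond_C \circ [f,g]$ and $[\diamond_A, \diamond_B]$ agree after precomposition with $\mathrm{left}_{A,B}$ and $\mathrm{right}_{A,B}$).

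Finally, the commutation $+ \circ (U \times U) = U \circ +_a$ is immediate from the construction: on objects, both sides produce $A+B$; on morphisms $f, g$, the action of $+_a$ is defined to coincide with $+$ of the underlying $\CC$-morphisms, and $U$ is the identity on morphisms. The strict monoidal-style equation thus holds on the nose rather than merely up to isomorphism.
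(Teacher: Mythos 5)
Your proof is correct and is essentially the verification the paper leaves implicit: routing everything through Proposition~\ref{prop:reflect} by exhibiting cocones in $\CCc$ (over the empty and the two-object discrete diagram) whose $U$-images are already colimits in $\CC$ is exactly the pattern the paper itself uses for the initial object in Proposition~\ref{prop:ccc-colimits}. The one step you state rather than check --- that the morphism action $f +_a g \coloneqq f + g$ lands in $\CCc$ --- follows by the same one-line copairing-uniqueness argument you give for $[f,g]$ (precompose $[\diamond_{A'}, \diamond_{B'}] \circ (f+g)$ with the injections and use affineness of $f$ and $g$), so nothing is missing.
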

\begin{proof}
Straightforward verification.
\qed\end{proof}

%%%%%%%%%%%%%%%%%%%%%%%%%%%%%%%%%%%%%%%%%%%%%%
\subsection{(Parameterised) initial algebras in $\CCc$}
%%%%%%%%%%%%%%%%%%%%%%%%%%%%%%%%%%%%%%%%%%%%%%

In this subsection we will show how (parameterised) initial algebras from $\CC$ may be reflected into $\CCc$ by using methods from~\cite{lnl-fpc-lmcs,lnl-fpc-icfp}. Towards this end, we assume that $\CC$ has some additional structure, so that parameterised initial algebras may be formed within it.
\begin{assumption}
Throughout the remainder of the section, we assume that $\CC$ has an initial object $\varnothing$ and all $\omega$-colimits.
\end{assumption}
\begin{proposition}
\label{prop:ccc-colimits}
The category $\CCc$ has an initial object and all $\omega$-colimits. Moreover, the forgetful functor $U : \CCc \to \CC$ preserves and reflects $\omega$-colimits.
\end{proposition}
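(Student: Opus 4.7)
The plan is to dispose first of the initial object, which was already produced in Proposition~\ref{prop:affine-coproducts}: the pair $(\varnothing, \perp_{\varnothing, I})$ is initial in $\CCc$. Under $U$ it is carried to $\varnothing$, which is initial in $\CC$, so preservation on this piece is immediate, and reflection is a special case of Proposition~\ref{prop:reflect}.

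For $\omega$-colimits, I would start with an $\omega$-chain $D : \omega \to \CCc$, given by objects $(A_n, \diamond_n)$ and transitions $f_n : (A_n, \diamond_n) \to (A_{n+1}, \diamond_{n+1})$. Applying $U$ yields an $\omega$-chain in $\CC$, which by our standing assumption has a colimit $A_\infty$ with coprojections $\iota_n : A_n \to A_\infty$. Since the defining equation of morphisms in $\CCc$ gives $\diamond_{n+1} \circ f_n = \diamond_n$ for every $n$, the family $(\diamond_n)_n$ is already a cocone over $U \circ D$ with vertex $I$. By universality in $\CC$, there is a unique map $\diamond_\infty : A_\infty \to I$ satisfying $\diamond_\infty \circ \iota_n = \diamond_n$ for all $n$, and this very equation says each $\iota_n$ lifts to a morphism $(A_n, \diamond_n) \to (A_\infty, \diamond_\infty)$ in $\CCc$.

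The main step is to verify universality of the lifted cocone in $\CCc$. Given any competing cocone $g_n : (A_n, \diamond_n) \to (B, \diamond_B)$ in $\CCc$, the underlying maps form a cocone on $U \circ D$, producing a unique $g_\infty : A_\infty \to B$ with $g_\infty \circ \iota_n = g_n$. To see that $g_\infty$ is actually a morphism of $\CCc$, I would show that both $\diamond_B \circ g_\infty$ and $\diamond_\infty$ postcomposed with any $\iota_n$ yield $\diamond_B \circ g_n = \diamond_n$, whence the universal property of $A_\infty$ in $\CC$ forces $\diamond_B \circ g_\infty = \diamond_\infty$. Preservation of this chosen $\omega$-colimit is immediate from the construction, which, since colimits are unique up to isomorphism, gives preservation of all $\omega$-colimits; reflection is Proposition~\ref{prop:reflect} specialised to $\omega$-chains. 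There is no genuine obstacle here --- the argument is pure bookkeeping with the universal property in $\CC$, once Proposition~\ref{prop:reflect} is in hand.
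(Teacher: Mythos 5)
Your proposal is correct and follows essentially the same route as the paper: lift the colimiting cocone of $U \circ D$ from $\CC$, obtain $\diamond_\infty$ as the unique mediating map into the cocone formed by the discarding maps, and conclude that the lifted cocone is colimiting in $\CCc$. The only cosmetic difference is that you verify the universal property of the lifted cocone by hand, whereas the paper simply invokes Proposition~\ref{prop:reflect} at that step (your verification is in effect an inlined special case of that proposition's proof).
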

\begin{proof}
The initial object is $(\varnothing, \perp_{\varnothing, I})$, because $U$ reflects colimits (Proposition~\ref{prop:reflect}).

To show that $\CCc$ has all $\omega$-colimits, let $D: \omega \to \CCc$ be an arbitrary $\omega$-diagram of $\CCc$ with $D = \left( (D_0, \diamond_0) \xrightarrow{d_0} (D_1, \diamond_1) \xrightarrow{d_1} \cdots \right)$.
Let $\mu = (M, \mu_i : D_i \to M)$ be the colimiting cocone of $UD$ in $\CC$.
Using the discarding maps $\diamond_i$, we can now form a cocone $\diamond = (I, \diamond_i: D_i \to I)$ of $UD$ in $\CC$.
Let $\diamond_M : M \to I$ be the unique cocone morphism from $\mu$ to $\diamond$ induced by the colimit. It is now easy to see that we have a cocone $\tau = ( (M, \diamond_M) , \mu_i: (D_i, \diamond_i) \to (M, \diamond_M))$ of $D$ in $\CCc$. Clearly, $\mu = U \tau$ and since $U$ reflects colimits (Proposition~\ref{prop:reflect}), it follows that $\tau$
is the colimiting cocone of $D$ in $\CCc$. Therefore, $\CCc$ has $\omega$-colimits and by construction of the colimits, we see that $U$ preserves (and reflects) them.
\qed\end{proof}
Next, we show that the functor $U$ may be used to reflect $\omega$-cocontinuity of functors on $\CC$ to functors on $\CCc$.
\begin{theorem}
\label{thm:functor-reflect-cocontinuous}
Let $H: \CCc^n \to \CCc$ be a functor and $T: \CC^n \to \CC$ an $\omega$-cocontinuous functor, such that the diagram:
\cstikz{u-commute.tikz}
commutes. Then, $H$ is also $\omega$-cocontinuous.
\end{theorem}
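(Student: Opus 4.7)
The plan is to reduce $\omega$-cocontinuity of $H$ to that of $T$ by exploiting the fact, established in Proposition~\ref{prop:ccc-colimits}, that $U : \CCc \to \CC$ both preserves and reflects $\omega$-colimits. Since $\omega$-colimits in a finite product of categories are computed componentwise, the induced product functor $U^n = U \times \cdots \times U : \CCc^n \to \CC^n$ inherits both preservation and reflection of $\omega$-colimits from $U$.

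First, I would fix an arbitrary $\omega$-diagram $D : \omega \to \CCc^n$ together with a colimiting cocone $\mu = (M, \mu_i : D_i \to M)$ in $\CCc^n$; it suffices to show that $H\mu$ is a colimiting cocone of $HD$ in $\CCc$. Applying $U$ and invoking the assumed commutativity $U \circ H = T \circ U^n$, we obtain the identity of cocones
\[ U(H\mu) \;=\; T(U^n \mu) . \]
Since $U^n$ preserves $\omega$-colimits, $U^n \mu$ is a colimiting cocone of $U^n D$ in $\CC^n$. Since $T$ is $\omega$-cocontinuous by assumption, $T(U^n \mu)$ is then a colimiting cocone of $T \circ U^n \circ D = U \circ H \circ D$ in $\CC$.

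Finally, because $U$ reflects small colimits (Proposition~\ref{prop:reflect}), and in particular reflects $\omega$-colimits, the fact that $U(H\mu)$ is a colimit of $U(HD)$ in $\CC$ implies that $H\mu$ is itself a colimit of $HD$ in $\CCc$. This establishes $\omega$-cocontinuity of $H$. There is no substantial obstacle here: the statement is essentially a formal consequence of $U$ being simultaneously $\omega$-cocontinuous and colimit-reflecting, combined with the commuting square relating $H$ and $T$ along $U$. The only minor care needed is to note that preservation and reflection of $\omega$-colimits along $U$ extend componentwise to $U^n$, which is immediate from the componentwise description of colimits in product categories.
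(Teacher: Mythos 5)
Your proposal is correct and follows essentially the same route as the paper's own proof: push the colimiting cocone forward along $U^{\times n}$, apply $\omega$-cocontinuity of $T$, identify the result with $U(H\mu)$ via the commuting square, and conclude by reflection of colimits along $U$ (Propositions~\ref{prop:reflect} and~\ref{prop:ccc-colimits}). The only difference is that you make explicit the componentwise argument for why $U^{\times n}$ preserves $\omega$-colimits, which the paper leaves implicit.
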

\begin{proof}
Let $D: \omega \to \CCc^{n}$ be an arbitrary $\omega$-diagram in $\CCc^{n}$ and let $\mu$ be its colimiting cocone. Since $U$ preserves $\omega$-colimits (Proposition~\ref{prop:ccc-colimits}), it follows that $U^{\times n}\mu$ is a colimiting cocone of $U^{\times n}D$ in $\CC^n.$
By assumption $T$ is $\omega$-cocontinuous, so $TU^{\times n} \mu$ is a colimiting cocone of $TU^{\times n}D$ in $\CC$. By commutativity of the above diagram, it follows $UH \mu$ is a colimiting cocone of $UHD$ in $\CC$. But $U$ reflects colimits, so this means that $H \mu$ is a colimiting cocone of $HD$, as required.
\qed\end{proof}
Therefore, in the situation of the above theorem, both functors $H$ and $T$ have parameterised initial algebras by Theorem~\ref{thm:par-exists}.
This brings us to our next theorem.
\begin{theorem}
\label{thm:par-initial-algebras-same}
Let $H$ and $T$ be $\omega$-cocontinuous functors, such that the diagram 
\cstikz{u-commute-again.tikz}
commutes. Let $(T^\dagger, \phi)$ and $(H^\dagger, \psi)$ be their parameterised initial algebras.
Then: 
\begin{enumerate}
\item The following diagram:
\cstikz{dagger-commute.tikz}
commutes.
\item The following (2-categorical) diagram:
\cstikz{2-categorical.tikz}
commutes.
\end{enumerate}
\end{theorem}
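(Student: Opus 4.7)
The plan is to reduce both claims to Adámek's construction of (parameterised) initial algebras as $\omega$-colimits of initial chains, and then exploit the fact that $U$ both preserves and reflects $\omega$-colimits (Proposition~\ref{prop:ccc-colimits}) together with the commutation hypothesis $U \circ H = T \circ U^{\times n}$. Since $H$ and $T$ are $\omega$-cocontinuous and the respective base categories have initial objects and all $\omega$-colimits, Theorem~\ref{thm:par-exists} furnishes $(H^\dagger,\psi)$ and $(T^\dagger,\phi)$, and each carrier may be computed, pointwise in the parameter, as the colimit of the corresponding initial chain starting at the initial object. The strategy is to show that applying $U$ to the chain for $H$ yields, on the nose, the chain for $T$, and then to transport both the carriers and the structure isomorphisms through $U$.

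For Claim~1, I would fix a parameter $\vec A \in \Ob(\CCc^{n-1})$, write $\vec B = U^{\times(n-1)} \vec A$, and consider the initial chain $D_H : \omega \to \CCc$ given by $(\varnothing,\perp_{\varnothing,I}) \to H(\vec A, (\varnothing,\perp_{\varnothing,I})) \to \cdots$ whose colimit is $H^\dagger \vec A$ with structure map $\psi_{\vec A}$. By Proposition~\ref{prop:ccc-colimits}, $U$ sends the initial object of $\CCc$ to the initial object $\varnothing$ of $\CC$, and by the hypothesis $U H = T \circ U^{\times n}$, a straightforward induction shows that $U \circ D_H$ is precisely the initial chain $D_T$ for $T(\vec B,-)$. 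Since $U$ preserves $\omega$-colimits, $U H^\dagger \vec A$ is a colimit of $D_T$, and with the canonical choice of colimit that underlies Theorem~\ref{thm:par-exists} this colimit is $T^\dagger \vec B$. Naturality in the parameter, together with the universal property of colimits applied to morphisms in $\CCc^{n-1}$, upgrades this pointwise identification to the functor equation $U \circ H^\dagger = T^\dagger \circ U^{\times(n-1)}$.

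For Claim~2, I would compare the natural isomorphisms directly. In Adámek's construction, $\psi_{\vec A}$ is the inverse of the canonical comparison morphism from $\mathrm{colim}\,(H(\vec A,-) \circ D_H)$ to $\mathrm{colim}\, D_H = H^\dagger \vec A$, induced by $\omega$-cocontinuity of $H(\vec A,-)$, and $\phi_{\vec B}$ has the analogous description. Applying $U$ and invoking preservation of $\omega$-colimits, together with the identification $U \circ D_H = D_T$ established above, I would check that $U\psi_{\vec A}$ is exactly the canonical comparison that defines $\phi_{\vec B}$; by uniqueness in the universal property, $U \psi = \phi \cdot U^{\times(n-1)}$ (after the identification from Claim~1), which is exactly the stated 2-cell equality.

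The main obstacle, and the point where one has to be careful, is the distinction between strict equality and canonical isomorphism: parameterised initial algebras are only determined up to a unique isomorphism, so the statement "the diagram commutes" implicitly relies on choosing $T^\dagger$ and $H^\dagger$ via the \emph{same} canonical colimit construction (as in the proof of Theorem~\ref{thm:par-exists}). Once that convention is fixed, the only remaining work is the diagram chase verifying naturality in the parameter, which is routine given that $U$ reflects all the colimits involved.
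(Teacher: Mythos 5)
Your proposal is correct, but it takes a different route from the paper in the sense that the paper does not argue this theorem at all: it simply cites \cite[Corollaries 4.21 and 4.27]{lnl-fpc-lmcs}, which establish exactly that parameterised initial algebras are preserved on the nose by a functor that strictly commutes with $H$ and $T$ and preserves the relevant colimits and the initial object. What you have written is essentially a self-contained reconstruction of the argument behind those corollaries: identify $H^\dagger \vec A$ and $T^\dagger(U^{\times(n-1)}\vec A)$ as colimits of initial chains, observe that $U$ carries the $H$-chain to the $T$-chain on the nose (using $U(\varnothing,\perp_{\varnothing,I})=\varnothing$ and $U\circ H = T\circ U^{\times n}$), and transport the structure isomorphisms through $U$ by preservation of colimits and uniqueness in the universal property. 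The one point that genuinely needs care is the one you flag yourself: strict commutation of the diagrams only holds for the \emph{canonical} choices of colimits, and here the paper's Proposition~\ref{prop:ccc-colimits} is doing real work --- the colimit of a chain in $\CCc$ is constructed so that its underlying object is literally the chosen colimit in $\CC$, which is what turns your ``canonical isomorphism'' into an equality. Your treatment of functoriality in the parameter and of Claim~2 is slightly compressed (the clean way to finish is to note that $U$ sends the initial $H(\vec A,-)$-algebra to an initial $T(U^{\times(n-1)}\vec A,-)$-algebra and then invoke uniqueness of mediating algebra morphisms), but the argument goes through. The trade-off is the usual one: the paper's citation is shorter and inherits the bookkeeping from \cite{lnl-fpc-lmcs}, while your version makes the proof self-contained and makes visible exactly which structural facts about $U$ (preservation and reflection of $\omega$-colimits, strict preservation of the initial object, strict commutation with the functors) are actually used.
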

\begin{proof}
The first statement follows by \cite[Corollary 4.21]{lnl-fpc-lmcs} and the second statement follows by \cite[Corollary 4.27]{lnl-fpc-lmcs}.
\qed\end{proof}
\begin{remark}
\label{rem:same-algebras}
The above theorem shows that the parameterised initial algebras of $H$ and $T$
respect the forgetful functor $U$ and are therefore constructed in the same way.
\end{remark}
\begin{remark}
If one is not interested in interpreting inductive data types defined by mutual
induction, then there is no need to form \emph{parameterised} initial algebras,
but merely initial algebras. In that case, the assumption that $\CC$ has all
$\omega$-colimits may be relaxed and one can assume that $\CC$ has colimits of
the initial sequences of the relevant functors. Then, most of the results
presented here can be simplified in a straightforward manner to handle this
case.
\end{remark}

%%%%%%%%%%%%%%%%%%%%%%%%%%%%%%%%%%%%%%%%%%%%%%%%%%%%%%%%%%%%%%%%%%%%%%%%%%%%%%
\section{Categorical Model}\label{sec:model}
%%%%%%%%%%%%%%%%%%%%%%%%%%%%%%%%%%%%%%%%%%%%%%%%%%%%%%%%%%%%%%%%%%%%%%%%%%%%%%

In this section we formulate our categorical model which we use to
interpret \lang{}.
\begin{notation}
We write $\cpo$ ($\cpobs$) for the category of (pointed) dcpo's and (strict) Scott-continuous maps between them.
\end{notation}
\begin{definition}
\label{def:model}
A categorical model of \lang{} is given by the following data:
\begin{enumerate}
\item A symmetric monoidal category $(\CC, \otimes, I, \alpha, \lambda, \rho, \sigma).$
\item An initial object $\varnothing \in \Ob(\CC)$ and binary coproducts $(A+B, \text{left}_{A,B}, \text{right}_{A,B}).$
\item The tensor product $\otimes$ distributes over $+$.
\item For each atomic type $\mathbf A \in \mathcal A$, an object $\mathbf A \in \Ob(\CC)$ together with a discarding map $\diamond_{\mathbf A} : \mathbf A \to I.$
\item The category $\CC$ has all $\omega$-colimits and $\otimes$ is an $\omega$-cocontinuous functor.
\item The category $\CC$ is $\cpobs$-enriched with least morphisms denoted $\perp_{A,B}$ and such that the symmetric monoidal structure and the coproduct structure are both $\cpo$-enriched.
\end{enumerate}
\end{definition}
This data suffices to interpret the language in the following way:
\begin{enumerate}
\item To interpret pair types.
\item To interpret sum types.
\item Used in the interpretation of \textbf{while} loops.
\item Necessary for the affine interpretation of the language.
\item To interpret inductive data types by forming parameterised initial algebras.
\item Used in the interpretation of \textbf{while} loops.
\end{enumerate}
\begin{assumption}
Henceforth, $\CC$ refers to an arbitrary, but fixed, model of \lang{}
and $\CCc \coloneqq \CC / I$ refers to the corresponding slice category
with the tensor unit $I$.
\end{assumption}
By using results from Section~\ref{sec:discarding}, we can now easily
establish some important properties of the category $\CCc.$ By
Proposition~\ref{prop:affine-monoidal}, it follows $\CCc$ has
a symmetric monoidal structure with tensor product $\otimes_a$
and by Proposition~\ref{prop:affine-coproducts}, it follows
$\CCc$ has finite coproducts with coproduct functor $+_a$.
We also know $\CCc$ has $\omega$-colimits by Proposition~\ref{prop:ccc-colimits}.
Finally, the next proposition is crucial for the construction of
discarding maps for inductive data types.
\begin{proposition}
\label{prop:tensor-coproduct-cocontinuous}
The functors $\otimes_a : \CCc \times \CCc$ and $+_a : \CCc \times \CCc \to \CCc$ are both
$\omega$-cocontinuous.
\end{proposition}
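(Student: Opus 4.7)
The plan is to invoke Theorem~\ref{thm:functor-reflect-cocontinuous} twice, once for each bifunctor, with $n = 2$. Both commuting squares have already been established: Proposition~\ref{prop:affine-monoidal} records $U \circ \otimes_a = \otimes \circ (U \times U)$, and Proposition~\ref{prop:affine-coproducts} records $U \circ +_a = + \circ (U \times U)$. Thus the entire task reduces to verifying that the two ``downstairs'' bifunctors $\otimes, +: \CC \times \CC \to \CC$ are $\omega$-cocontinuous, at which point Theorem~\ref{thm:functor-reflect-cocontinuous} delivers the conclusion in each case.

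For $\otimes_a$ this is immediate: $\omega$-cocontinuity of $\otimes$ is built into the definition of a model (item 5 of Definition~\ref{def:model}). Applying Theorem~\ref{thm:functor-reflect-cocontinuous} with $H = \otimes_a$ and $T = \otimes$ yields $\omega$-cocontinuity of $\otimes_a$ at once.

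For $+_a$, I first need to observe that the coproduct bifunctor $+ : \CC \times \CC \to \CC$ is $\omega$-cocontinuous. This is a general fact and does not depend on anything particular to our model: when $\CC$ has binary coproducts, $+$ is left adjoint to the diagonal functor $\Delta : \CC \to \CC \times \CC$, and left adjoints preserve all colimits that exist, in particular $\omega$-colimits. (Equivalently, one may invoke the ``colimits commute with colimits'' principle, since $A+B$ is itself a small colimit.) With this in hand, Theorem~\ref{thm:functor-reflect-cocontinuous} applied to $H = +_a$ and $T = +$ gives $\omega$-cocontinuity of $+_a$.

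There is no real obstacle: the hard work has been done in Section~\ref{sec:discarding}, and this proposition is essentially a corollary of Theorem~\ref{thm:functor-reflect-cocontinuous}. The only substantive point is the side remark that the coproduct bifunctor on $\CC$ is automatically $\omega$-cocontinuous, which is standard and requires no additional assumption on the model.
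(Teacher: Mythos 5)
Your proposal is correct and follows essentially the same route as the paper: both reduce the claim to Theorem~\ref{thm:functor-reflect-cocontinuous} via the commuting squares recorded in Propositions~\ref{prop:affine-monoidal} and~\ref{prop:affine-coproducts}. The only difference is that you explicitly justify $\omega$-cocontinuity of $+$ on $\CC$ (left adjoint to the diagonal), a point the paper leaves implicit here but states in the proof of Proposition~\ref{prop:well-defined}.
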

\begin{proof}
In the previous section we showed
$ \odot \circ (U \times U) = U \circ \odot_a  : \CCc \times \CCc \to \CC, $
for $\odot \in \{\otimes, +\}.$ Then, by Theorem~\ref{thm:functor-reflect-cocontinuous}, it follows $\odot$ is also $\omega$-cocontinuous.
\qed\end{proof}
% Of course, as a coproduct, it is obvious $+_a$ preserves all small colimits.
Therefore, by Theorem~\ref{thm:par-exists}, we see that both categories $\CC$ and
$\CCc$ have sufficient structure to form parameterised initial algebras for all
functors composed out of tensors, coproducts and constants. The category $\CCc$
has the additional benefit that its parameterised initial algebras
also come equipped with discarding maps.

%%%%%%%%%%%%%%%%%%%%%%%%%%%%%%%%%%%%%%%%%%%%%%
\subsection{Concrete models}
%%%%%%%%%%%%%%%%%%%%%%%%%%%%%%%%%%%%%%%%%%%%%%

In this subsection we consider some concrete models of \lang{}. 
\begin{example}
The terminal category $\mathbf 1$ is a (completely degenerate) \lang{} model.
\end{example}
Next, we consider some non-degenerate models.
\begin{example}
The category $\cpobs$ is an \lang{} model.
\end{example}
However, in this model every object has a canonical comonoid structure, so it is not a truly representative
model for an affine type system like ours. In the next example we describe a
more representative model which has been studied in the context of circuit
description languages and quantum programming.
\begin{example}
Let $\mathbf M$ be a small $\dcpobs$-symmetric monoidal category and 
let $\widehat{\mathbf M} = [\mathbf M^\op, \dcpobs]$ be the indicated $\dcpobs$-functor category.
Then $\widehat{\mathbf M}$ is an \lang{} model when equipped with the Day convolution monoidal structure~\cite{day-convolution}.
By making suitable choices for $\mathbf M$, the category $\widehat{\mathbf M}$
becomes a model of Proto-Quipper-M~\cite{pqm} and ECLNL~\cite{eclnl}, which are
programming languages for string diagrams that have also been studied in the
context of quantum programming.
\end{example}
Next, we discuss how fragments of the language may be interpreted in categories of W*-algebras~\cite{takesaki}, which are used to study quantum computing.
\begin{example}
Let $\WNMIU$ be the category of W*-algebras and normal unital $*$-homomorphisms between them. Let $\VV \coloneqq (\WNMIU)^\op$ be its opposite category.
Then $\VV$ is an \lang{} model without recursion~\cite{quantum-collections}, i.e., one can interpret all \lang{} constructs except for while loops within $\VV$, because $\VV$ is not $\dcpobs$-enriched.
\end{example}
\begin{example}
Let $\Wstar$ be the category of W*-algebras and normal completely-positive subunital maps between them. Let $\DD \coloneqq (\Wstar)^\op$ be its opposite category.
Then $\DD$ is an \lang{} model which supports simple non-nested type induction, i.e., one can interpret all \lang{} constructs within $\DD$ using the methods described in this paper, provided that inductive data types contain at most one free type variable.
\end{example}
\begin{remark}
In fact, it is possible to interpret all of QPL (and therefore also \lang{}
which is a fragment of QPL) by using an adjunction between $\VV$ and $\DD$, as
was shown in~\cite{qpl-fossacs}. However, this requires considering the
specifics of this particular model, which has not been axiomatised yet, and
separating the interpretation of types and values (in $\VV$) from the
interpretation of terms (in $\DD$).
\end{remark}

%%%%%%%%%%%%%%%%%%%%%%%%%%%%%%%%%%%%%%%%%%%%%%%%%%%%%%%%%%%%%%%%%%%%%%%%%%%%%%
\section{Denotational Semantics of \lang{}}\label{sec:semantics}
%%%%%%%%%%%%%%%%%%%%%%%%%%%%%%%%%%%%%%%%%%%%%%%%%%%%%%%%%%%%%%%%%%%%%%%%%%%%%%
In this section we present the denotational semantics of \lang{}. First, we
show how types are interpreted  in \secref{sub:types}. Since our type system is
affine, we construct discarding maps for all types in
\secref{sub:affine}. Folding and unfolding of inductive types are shown to be
discardable operations in \secref{sub:folding}. The interpretations of terms
and configurations are defined in \secref{sub:terms} and
\secref{sub:configurations}. Finally, we prove soundness and adequacy in
\secref{sub:soundness}.

%%%%%%%%%%%%%%%%%%%%%%%%%%%%%%%%%%%%%%%
\subsection{Interpretation of types}\label{sub:types}
%%%%%%%%%%%%%%%%%%%%%%%%%%%%%%%%%%%%%%%
The (standard) interpretation of a type $\Theta \vdash A$ is a functor 
$\lrb{\Theta \vdash A} : \CC^{|\Theta|} \to \CC$, defined in Figure~\ref{fig:type-interpretations} (left), where
$K_X$ indicates the constant $X$-functor. 
\begin{figure}[t]
  \begin{minipage}{0.48\textwidth}
\small
   \begin{align*}
    \lrb{\Theta \vdash A}           &: \CC^{|\Theta|} \to \CC \\
    \lrb{\Theta \vdash \Theta_i}    &= \Pi_i \\
    \lrb{\Theta \vdash I}           &= K_{I} \\
    \lrb{\Theta \vdash \mathbf A}   &= K_{\mathbf A} \\
    \lrb{\Theta \vdash A + B}       &= + \circ \langle \lrb{\Theta \vdash A} , \lrb{\Theta \vdash B} \rangle \\
    \lrb{\Theta \vdash A \otimes B} &= \otimes \circ \langle \lrb{\Theta \vdash A} , \lrb{\Theta \vdash B} \rangle \\
    \lrb{\Theta \vdash \mu X. A}    &= \lrb{\Theta, X \vdash A}^\dagger
    \end{align*}
\end{minipage}
 \vline 
 \quad
  \begin{minipage}{0.48\textwidth}
  \small
    \begin{align*}
    \elrb{\Theta \vdash A}           &: \CCc^{|\Theta|} \to \CCc \\
    \elrb{\Theta \vdash \Theta_i}    &= \Pi_i \\
    \elrb{\Theta \vdash I}           &= K_{(I, \id_I)} \\
    \elrb{\Theta \vdash \mathbf A}   &= K_{(\mathbf A, \diamond_{\mathbf A})} \\
    \elrb{\Theta \vdash A + B}       &= +_a \circ \langle \elrb{\Theta \vdash A} , \elrb{\Theta \vdash B} \rangle \\
    \elrb{\Theta \vdash A \otimes B} &= \otimes_a \circ \langle \elrb{\Theta \vdash A} , \elrb{\Theta \vdash B} \rangle \\
    \elrb{\Theta \vdash \mu X. A}    &= \elrb{\Theta, X \vdash A}^\dagger
    \end{align*}
  \end{minipage}
\caption{Standard (left) and affine (right) interpretations of types. }
\label{fig:type-interpretations}
\end{figure}
We begin by showing that this assignment is well-defined, i.e., we have to show that the required parameterised initial algebras exist.
\begin{proposition}
\label{prop:well-defined}
$\lrb{\Theta \vdash A}$ is a well-defined $\omega$-cocontinuous functor.
\end{proposition}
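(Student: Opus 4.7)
The plan is to proceed by structural induction on the derivation of $\Theta \vdash A$, showing at each step that $\lrb{\Theta \vdash A}$ is both a well-defined functor and $\omega$-cocontinuous. The key general facts used throughout are: (i) $\omega$-cocontinuous functors are closed under composition; (ii) a tupling $\langle F, G \rangle : \mathbf X \to \CC \times \CC$ of $\omega$-cocontinuous functors is $\omega$-cocontinuous, since colimits in a product category are computed componentwise; and (iii) any constant functor $K_c : \mathbf X \to \CC$ is $\omega$-cocontinuous because $\omega$ is a connected category and the colimit of a constant diagram over a connected indexing category is the constant value itself.

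For the base cases, the variable case $\Theta \vdash \Theta_i$ sets $\lrb{-} = \Pi_i$, and product projections preserve all colimits, so in particular $\omega$-colimits. The unit case $\Theta \vdash I$ and atomic case $\Theta \vdash \mathbf A$ use the constant functors $K_I$ and $K_{\mathbf A}$, which are $\omega$-cocontinuous by the remark above. For the binary cases $\Theta \vdash A + B$ and $\Theta \vdash A \otimes B$, the induction hypothesis gives that $\lrb{\Theta \vdash A}$ and $\lrb{\Theta \vdash B}$ are $\omega$-cocontinuous, so by (ii) the tupling $\langle \lrb{\Theta \vdash A}, \lrb{\Theta \vdash B} \rangle$ is too. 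It then suffices to compose with $+$ or $\otimes$. The functor $\otimes$ is $\omega$-cocontinuous by clause~(5) of Definition~\ref{def:model}, while $+ : \CC \times \CC \to \CC$ is a left adjoint to the diagonal and therefore preserves all colimits (this relies on clause~(2)).

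The only delicate case is the fixpoint type $\Theta \vdash \mu X. A$. By the induction hypothesis, $\lrb{\Theta, X \vdash A} : \CC^{|\Theta|} \times \CC \to \CC$ is an $\omega$-cocontinuous functor. Since $\CC$ has an initial object and all $\omega$-colimits (part of the model assumptions), Theorem~\ref{thm:par-exists} applies to yield a parameterised initial algebra $(\lrb{\Theta, X \vdash A}^\dagger, \tau)$ with $\lrb{\Theta, X \vdash A}^\dagger : \CC^{|\Theta|} \to \CC$ itself $\omega$-cocontinuous. Setting $\lrb{\Theta \vdash \mu X. A} \coloneqq \lrb{\Theta, X \vdash A}^\dagger$ therefore produces a well-defined $\omega$-cocontinuous functor, completing the induction.

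Of the steps above, the $\mu$ case is the only one requiring more than routine bookkeeping, but essentially all the work there is already packaged inside Theorem~\ref{thm:par-exists}; the rest of the proof amounts to checking that each syntactic combinator preserves $\omega$-cocontinuity, which the pairing/composition/constancy observations handle uniformly.
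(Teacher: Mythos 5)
Your proof is correct and follows essentially the same route as the paper: induction on the type derivation, using that projections, constants, $+$, and $\otimes$ are $\omega$-cocontinuous, that $\omega$-cocontinuity is closed under composition and pairing, and invoking Theorem~\ref{thm:par-exists} for the $\mu$ case. The only difference is that you spell out justifications (connectedness of $\omega$ for constants, left-adjointness for $+$) that the paper leaves implicit.
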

\begin{proof}
Projection functors and constant functors are obviously $\omega$-cocontinuous. The coproduct functor is $\omega$-cocontinuous, because it is given by a colimiting construction. The tensor product $\otimes$ is $\omega$-cocontinuous by assumption. Also, $\omega$-cocontinuous functors are closed under composition and pairing~\cite{ls81}.
By Theorem~\ref{thm:par-exists}, $\lrb{\Theta, X \vdash A}^\dagger$ is well-defined and also an $\omega$-cocontinuous functor.
\qed\end{proof}
The semantics of terms is defined on closed types, so for brevity we introduce the following notation.
\begin{notation}
For any closed type $\cdot \vdash A$, let
$\lrb{A} \coloneqq \lrb{\cdot \vdash A}(*) \in \Ob(\CC) , $ 
where $*$ indicates the only object of the terminal category $\textbf{1}$.
\end{notation}

%%%%%%%%%%%%%%%%%%%%%%%%%%%%%%%%%%%%%%
\subsection{Affine Structure of Types}\label{sub:affine}
%%%%%%%%%%%%%%%%%%%%%%%%%%%%%%%%%%%%%%

In this subsection we describe the affine structure of our types by
constructing an appropriate discarding map for every type. This is achieved by using the
results we established in~\secref{sec:discarding} and by providing an
\emph{affine interpretation of types} as functors on the slice category $\CCc =
\CC / I.$ The affine interpretation is related to the standard one via the
forgetful functor which results in the construction of the required discarding
maps.

The affine interpretation of a type $\Theta \vdash A$ is a functor 
$\elrb{\Theta \vdash A} : \CCc^{|\Theta|} \to \CCc$, defined in Figure~\ref{fig:type-interpretations} (right). 
\begin{proposition}
$\elrb{\Theta \vdash A}$ is a well-defined $\omega$-cocontinuous functor.
\end{proposition}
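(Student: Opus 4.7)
The plan is to mirror the proof of Proposition~\ref{prop:well-defined} by induction on the derivation of $\Theta \vdash A$, but replacing each ingredient by its slice-category analogue established in Section~\ref{sec:discarding}. So at each stage I need both well-definedness (the relevant (parameterised) initial algebras really do live in $\CCc$) and $\omega$-cocontinuity, and the latter is what actually drives the recursion, since the $\mu$-case requires an $\omega$-cocontinuous functor as input to Theorem~\ref{thm:par-exists}.

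For the base cases, projections $\Pi_i : \CCc^{|\Theta|} \to \CCc$ and the constant functors $K_{(I, \id_I)}$ and $K_{(\mathbf A, \diamond_{\mathbf A})}$ are $\omega$-cocontinuous for the usual reasons, and they obviously land in $\CCc$ since their target objects have been equipped with explicit discarding maps. For the binary connectives, I would invoke Proposition~\ref{prop:tensor-coproduct-cocontinuous}, which tells us that both $+_a$ and $\otimes_a$ are $\omega$-cocontinuous functors on $\CCc \times \CCc$; combined with the inductive hypotheses on $\elrb{\Theta \vdash A}$ and $\elrb{\Theta \vdash B}$, closure of $\omega$-cocontinuous functors under pairing and composition (cf.~\cite{ls81}) then yields the result.

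The only non-routine step is the $\mu$-case. Here I apply the inductive hypothesis to $\Theta, X \vdash A$ to obtain an $\omega$-cocontinuous functor $\elrb{\Theta, X \vdash A} : \CCc^{|\Theta|} \times \CCc \to \CCc$. By Proposition~\ref{prop:ccc-colimits}, $\CCc$ has an initial object and all $\omega$-colimits, so the hypotheses of Theorem~\ref{thm:par-exists} are satisfied; this produces a parameterised initial algebra $\elrb{\Theta, X \vdash A}^\dagger : \CCc^{|\Theta|} \to \CCc$ which by the same theorem is itself $\omega$-cocontinuous. This is exactly what we need for $\elrb{\Theta \vdash \mu X. A}$.

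I do not expect a genuine obstacle: the real conceptual work has already been absorbed into Proposition~\ref{prop:ccc-colimits} (colimits lift to $\CCc$), Proposition~\ref{prop:tensor-coproduct-cocontinuous} (the affine tensor and coproduct are $\omega$-cocontinuous), and Theorem~\ref{thm:par-exists} (parameterised initial algebras exist and preserve $\omega$-cocontinuity). The proof is thus a completely parallel induction to that of Proposition~\ref{prop:well-defined}, and one may reasonably conclude with \emph{The proof proceeds by straightforward induction on $\Theta \vdash A$, analogous to the proof of Proposition~\ref{prop:well-defined}, using Propositions~\ref{prop:ccc-colimits} and~\ref{prop:tensor-coproduct-cocontinuous} together with Theorem~\ref{thm:par-exists}.}
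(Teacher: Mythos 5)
Your proposal is correct and follows essentially the same route as the paper: induction on the type derivation, with the base cases handled by cocontinuity of projections and constants, the binary connectives by Proposition~\ref{prop:tensor-coproduct-cocontinuous}, and the $\mu$-case by Theorem~\ref{thm:par-exists} applied in $\CCc$ (which is legitimate by Proposition~\ref{prop:ccc-colimits}). You are merely more explicit than the paper, which compresses all of this into an appeal to Proposition~\ref{prop:tensor-coproduct-cocontinuous} plus ``the same arguments as in Proposition~\ref{prop:well-defined}.''
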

\begin{proof}
The tensor product $\otimes_a$ and coproduct functors $+_a$ are $\omega$-cocontinuous by Proposition~\ref{prop:tensor-coproduct-cocontinuous}. Using the same arguments as in Proposition~\ref{prop:well-defined}, we finish the proof.
\qed\end{proof}
\begin{notation}
For any closed type $\cdot \vdash A$, let
$ \elrb{A} \coloneqq \elrb{\cdot \vdash A}(*) \in \Ob(\CCc) . $
\end{notation}
We proceed by describing the relationship between the standard and affine interpretation of types.

\begin{theorem}\label{thm:discard}
For any type $\Theta \vdash A$, the following diagram
\cstikz{types-relation.tikz}
commutes. Therefore, for any closed type $\cdot \vdash A$, we have $\lrb A = U \elrb A.$
\end{theorem}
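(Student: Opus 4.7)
The plan is to prove the commuting square by structural induction on the derivation of $\Theta \vdash A$, at each step using a ready-made compatibility lemma for the corresponding type constructor; the closed-type corollary will then fall out by evaluating the square at the unique object of $\CCc^0 = \CC^0 = \mathbf{1}$. The overall target is the equality
\[ U \circ \elrb{\Theta \vdash A} \;=\; \lrb{\Theta \vdash A} \circ U^{\times |\Theta|} \]
as functors $\CCc^{|\Theta|} \to \CC$.

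The base cases are immediate from the defining clauses in Figure~\ref{fig:type-interpretations}. For $\Theta \vdash \Theta_i$ both interpretations are the projection $\Pi_i$ and $U$ commutes with projections componentwise. For $\Theta \vdash I$ the two interpretations are the constant functors $K_{(I,\id_I)}$ and $K_I$, and the equality follows from $U(I,\id_I) = I$. The atomic case $\Theta \vdash \mathbf{A}$ is identical, using $U(\mathbf{A},\diamond_{\mathbf{A}}) = \mathbf{A}$.

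For the binary connectives $A \star B$ with $\star \in \{+,\otimes\}$, the key ingredient is already established: Propositions~\ref{prop:affine-monoidal} and~\ref{prop:affine-coproducts} supply the identity $\star \circ (U \times U) = U \circ \star_a$ at the level of the monoidal/coproduct bifunctors themselves. Combining this with the inductive hypothesis for $A$ and $B$ and the evident fact that pairings intertwine with products of $U$, i.e. $(U \times U) \circ \langle F, G \rangle = \langle U \circ F, U \circ G \rangle$, gives the desired equality on $A \star B$ by direct computation from the clauses of Figure~\ref{fig:type-interpretations}.

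The main obstacle, and the only nontrivial case, is the inductive type $A = \mu X. B$. Here the inductive hypothesis provides a commuting square for $\Theta, X \vdash B$, namely $U \circ \elrb{\Theta, X \vdash B} = \lrb{\Theta, X \vdash B} \circ U^{\times(|\Theta|+1)}$. The two functors $\elrb{\Theta, X \vdash B}$ and $\lrb{\Theta, X \vdash B}$ are $\omega$-cocontinuous by Proposition~\ref{prop:well-defined} and its affine analogue. These are exactly the hypotheses of Theorem~\ref{thm:par-initial-algebras-same}(1), whose conclusion is $U \circ \elrb{\Theta, X \vdash B}^\dagger = \lrb{\Theta, X \vdash B}^\dagger \circ U^{\times |\Theta|}$, which by the defining clause for $\mu X. B$ is the required square for $A = \mu X. B$. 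Thus the heavy lifting for the inductive-type case is already packaged in Theorem~\ref{thm:par-initial-algebras-same}, and the proof is little more than checking that its hypotheses hold and unfolding definitions; the corollary $\lrb{A} = U\elrb{A}$ for closed $A$ follows by evaluating the resulting square at the unique object $*$.
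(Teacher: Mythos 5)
Your proof is correct and follows essentially the same route as the paper: a structural induction on $\Theta \vdash A$ whose base and binary cases are handled by the compatibility identities of Propositions~\ref{prop:affine-monoidal} and~\ref{prop:affine-coproducts}, and whose $\mu$-case is discharged by Theorem~\ref{thm:par-initial-algebras-same}(1). The paper leaves these details implicit ("by induction using the established results"), so your write-up is simply a fuller rendering of the intended argument.
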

\begin{proof}
By induction on $\Theta \vdash A$ using the established results from~\secref{sec:discarding}.
\qed\end{proof}

This theorem shows that for any closed type $A$, we have $\elrb A = (\lrb A,
\diamond_{\lrb A}),$ where the discarding map $\diamond_{\lrb A} : \lrb A \to
I$ is constructed by the affine type interpretation in
Figure~\ref{fig:type-interpretations} (right).  We will later see
(Theorem~\ref{thm:affine-values}) that the interpretations of our values are
discardable morphisms with respect to this choice of discarding maps.

%%%%%%%%%%%%%%%%%%%%%%%%%%%%%%%%%%%%%%
\subsection{Folding and Unfolding of Inductive Datatypes}\label{sub:folding}
%%%%%%%%%%%%%%%%%%%%%%%%%%%%%%%%%%%%%%

The purpose of this subsection is to define \emph{folding} and \emph{unfolding}
of inductive data types (which we need to define the term semantics) and also to
demonstrate that folding/unfolding is a discaradble isomorphism with respect to
the affine structure of our types.
\begin{lemma}[Type Substitution]
\label{lem:type-sub}
Let $\Theta, X \vdash A$ and $\Theta \vdash B$ be types. Then:
\begin{enumerate}
  \item $ \lrb{\Theta \vdash A[B/X]} = \lrb{\Theta, X \vdash A} \circ \langle \Id, \lrb{\Theta \vdash B}\rangle . $
  \item $ \elrb{\Theta \vdash A[B/X]} = \elrb{\Theta, X \vdash A} \circ \langle \Id, \elrb{\Theta \vdash B}\rangle . $
\end{enumerate}
\end{lemma}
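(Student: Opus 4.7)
The plan is to prove both parts simultaneously by structural induction on the type derivation $\Theta, X \vdash A$. Since the clauses defining $\lrb{-}$ and $\elrb{-}$ in Figure~\ref{fig:type-interpretations} are formally identical apart from the ambient category and monoidal structure, each case produces essentially the same equation twice, once in $\CC$ and once in $\CCc$.

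The variable, constant, sum and tensor cases are routine. For $A = \Theta_i$ with $\Theta_i \neq X$, both sides reduce to the $i$-th projection $\CC^{|\Theta|} \to \CC$; for $A = X$, both sides reduce to $\lrb{\Theta \vdash B}$; for $A = I$ and $A = \mathbf A$, both sides are constant functors with the same value. For $A = A_1 \star A_2$ with $\star \in \{+, \otimes\}$, substitution distributes syntactically over $\star$, and the inductive hypothesis applied to $A_1$ and $A_2$, combined with bifunctoriality of $\star$ (respectively $\star_a$) and the universal property of products of categories, yields the equation.

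The only substantive case is $A = \mu Y.\, A_0$, with $\Theta, X, Y \vdash A_0$. After $\alpha$-renaming we may assume $Y \notin \Theta, X$ and $Y \notin \mathrm{FV}(B)$, so $(\mu Y.\, A_0)[B/X] = \mu Y.\, A_0[B/X]$, and the goal reduces to
\[ \lrb{\Theta, Y \vdash A_0[B/X]}^\dagger \;=\; \lrb{\Theta, X, Y \vdash A_0}^\dagger \circ \langle \Id, \lrb{\Theta \vdash B} \rangle . \]
The inductive hypothesis on $A_0$, together with routine exchange/weakening bookkeeping (bringing $X$ to the end of the context corresponds to pre-composing with a swap functor on $\CC^{|\Theta|+2}$, and replacing $\lrb{\Theta \vdash B}$ by its weakening to $\CC^{|\Theta|+1}$ uses that $Y \notin \mathrm{FV}(B)$), identifies $\lrb{\Theta, Y \vdash A_0[B/X]}$ with $\lrb{\Theta, X, Y \vdash A_0} \circ (F \times \Id_\CC)$ where $F \coloneqq \langle \Id, \lrb{\Theta \vdash B} \rangle$. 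The claim then reduces to the standard reindexing property of parameterised initial algebras: if $T : \AAA \times \BB \to \BB$ has parameterised initial algebra $(T^\dagger, \tau)$ and $G : \AAA' \to \AAA$ is any functor, then $T \circ (G \times \Id_\BB)$ has parameterised initial algebra $(T^\dagger \circ G, \tau G)$. This is immediate from the universal property applied pointwise, since for each object $A'$ the pair $(T^\dagger(GA'), \tau_{GA'})$ is by definition an initial $T(GA', -)$-algebra.

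The main obstacle is this $\mu$ case, both the bookkeeping around the position of $X$ in the context and the invocation of the reindexing property. Part~2 is proved by exactly the same induction carried out in $\CCc$ in place of $\CC$; all of the structure used (finite products of the slice category, the bifunctors $\otimes_a$ and $+_a$, and parameterised initial algebras for $\omega$-cocontinuous functors) is available in $\CCc$ by Propositions~\ref{prop:affine-monoidal}, \ref{prop:affine-coproducts}, \ref{prop:ccc-colimits} and~\ref{prop:tensor-coproduct-cocontinuous}, and the reindexing property depends only on the universal property of parameterised initial algebras, so it transfers to $\CCc$ without change.
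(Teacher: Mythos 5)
Your proof is correct and follows essentially the same route as the paper, whose own proof is a one-line deferral to the analogous substitution lemma of the cited work: a structural induction whose only substantive case is $\mu Y.\,A_0$, resolved by the reindexing property of parameterised initial algebras, carried out in parallel in $\CC$ and $\CCc$. The one point deserving a remark is that the universal property alone exhibits $(T^\dagger \circ G, \tau G)$ only as \emph{a} parameterised initial algebra of $T \circ (G \times \Id_{\BB})$, i.e.\ determines it up to canonical isomorphism, so the on-the-nose equality asserted by the lemma additionally uses that $(-)^\dagger$ is the canonical colimit-of-initial-sequence construction of Theorem~\ref{thm:par-exists} with a fixed choice of colimits, under which the two sides agree strictly.
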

\begin{proof}
Straightforward induction, essentially the same as \cite[Lemma 6.5]{lnl-fpc-lmcs}.
\qed\end{proof}
\begin{definition}\label{def:fold-unfold}
For any closed type $\cdot \vdash \mu X. A,$ we define two isomorphisms:
\begin{align*}
\sfold_{\mu X.A} &: \lrb{A[\mu X. A/ X]}  = \lrb{X \vdash A} \lrb{\mu X. A}   \cong \lrb{\mu X. A}  : \sunfold_{\mu X.A} \\
\efold_{\mu X.A} &: \elrb{A[\mu X. A/ X]}  = \elrb{X \vdash A} \elrb{\mu X. A}   \cong \elrb{\mu X. A}  : \eunfold_{\mu X.A}
\end{align*}
\end{definition}
Since type substitution holds up to equality, it follows that folding/unfolding
of inductive data types is determined entirely by the initial algebra structure
of the corresponding endofunctors. Finally, we show that folding/unfolding of
an inductive data type is the same isomorphism for both the standard and affine
type interpretations.
\begin{theorem}\label{thm:fold-unfold-relationship}
Given a closed type $\cdot \vdash \mu X. A$, then the following diagram
\cstikz{fold-unfold.tikz}
commutes.
\end{theorem}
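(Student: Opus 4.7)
My strategy is to obtain this commutative diagram as a direct instance of Theorem~\ref{thm:par-initial-algebras-same}(2), applied to the one-argument functors that present the fixed point. The only real work is to identify the structural natural isomorphisms that appear there with $\sfold$ and $\efold$, after absorbing the substitution equality provided by Lemma~\ref{lem:type-sub}.

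First I would set $T \coloneqq \lrb{X \vdash A}$ and $H \coloneqq \elrb{X \vdash A}$. Both are $\omega$-cocontinuous (by Proposition~\ref{prop:well-defined} and its affine analogue), and Theorem~\ref{thm:discard} supplies the commuting square $U \circ H = T \circ U$ that is the hypothesis of Theorem~\ref{thm:par-initial-algebras-same}. The resulting parameterised initial algebras $(T^\dagger, \phi)$ and $(H^\dagger, \psi)$, evaluated at the unique object $*$ of the terminal category, give $T^\dagger(*) = \lrb{\mu X. A}$ and $H^\dagger(*) = \elrb{\mu X. A}$. By Definition~\ref{def:fold-unfold}, together with the identifications $\lrb{A[\mu X.A/X]} = \lrb{X\vdash A}\lrb{\mu X.A}$ and $\elrb{A[\mu X.A/X]} = \elrb{X\vdash A}\elrb{\mu X.A}$ from Lemma~\ref{lem:type-sub}, the components $\phi_*$ and $\psi_*$ are exactly $\sfold_{\mu X.A}$ and $\efold_{\mu X.A}$ respectively.

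With these identifications in place, part~(1) of Theorem~\ref{thm:par-initial-algebras-same}, evaluated at $*$, recovers $U\elrb{\mu X. A} = \lrb{\mu X. A}$ (already known from Theorem~\ref{thm:discard}), so both parameterised initial algebras live over the same object of $\CC$. The key input is part~(2): its 2-categorical diagram says precisely that $\psi$ is sent by $U$ to $\phi$. Evaluating this equality of natural transformations at $*$ gives exactly $U(\efold_{\mu X. A}) = \sfold_{\mu X. A}$. The companion equality $U(\eunfold_{\mu X.A}) = \sunfold_{\mu X.A}$ then follows at once, since the unfolds are inverses of the folds by Definition~\ref{def:fold-unfold} and any functor preserves inverses. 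Together these two equalities express commutativity of the stated diagram.

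I do not expect a genuine obstacle here; the only delicate point is bookkeeping. One must apply Lemma~\ref{lem:type-sub} uniformly on both the standard and affine sides so that the two $\tau$-components being compared actually coincide after applying $U$. Because the affine type interpretation has been designed to mirror the standard one componentwise, and because $U$ interacts strictly with $\otimes_a$ and $+_a$ (Propositions~\ref{prop:affine-monoidal} and~\ref{prop:affine-coproducts}), this uniform identification goes through verbatim and the argument reduces to a straightforward induction-free invocation of Theorem~\ref{thm:par-initial-algebras-same}.
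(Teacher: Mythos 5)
Your proof is correct and follows essentially the same route as the paper, which simply cites Theorem~\ref{thm:par-initial-algebras-same}(2); you have just spelled out the instantiation ($T = \lrb{X \vdash A}$, $H = \elrb{X \vdash A}$, with the hypothesis square supplied by Theorem~\ref{thm:discard} and the identification of $\phi_*$, $\psi_*$ with $\sfold$, $\efold$ via Lemma~\ref{lem:type-sub}) that the paper leaves implicit.
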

\begin{proof}
This follows immediately by Theorem~\ref{thm:par-initial-algebras-same} (2).
\qed\end{proof}
Therefore folding/unfolding of types is a discardable isomorphism.
%
%%%%%%%%%%%%%%%%%%%%%%%%%%%%%%%%%%%%%%%
\subsection{Interpretation of terms}\label{sub:terms}
%%%%%%%%%%%%%%%%%%%%%%%%%%%%%%%%%%%%%%%
In this subsection we explain how to interpret the terms of $\lang{}.$

A variable context $\Gamma = x_1:A_1, \ldots, x_n:A_n$ is interpreted as the object $\lrb{\Gamma} := \lrb{A_1} \otimes \cdots \otimes \lrb{A_n} \in \text{Ob}(\CC).$
A term judgement $ \vdash \langle \Gamma \rangle\ M\ \langle \Sigma \rangle$ is interpreted as a morphism $ \lrb{ \vdash \langle \Gamma \rangle\ M\ \langle \Sigma \rangle} : \lrb \Gamma \to \lrb \Sigma $
of $\CC$ which is defined in Figure~\ref{fig:semantics}.
For brevity, we will simply write $\lrb{M} \coloneqq \lrb{\Pi \vdash \langle \Gamma \rangle\ M\ \langle \Sigma \rangle}$ whenever the contexts are clear.
\begin{figure}[t]
$
\small{
\begin{array}{l}
% (unit)
\lrb{ \vdash \langle \Gamma \rangle\  \newunit\ u\ \langle \Gamma, u:I \rangle} \coloneqq  \left(
      \lrb{\Gamma}
        \xrightarrow{\cong}
      \lrb{\Gamma} \otimes I
      \right)\\
% (discard)
\lrb{ \vdash \langle \Gamma, x : A \rangle\  \textbf{discard}\ x\ \langle \Gamma \rangle} \coloneqq  \left(
      \lrb{\Gamma} \otimes \lrb A
        \xrightarrow{\id \otimes \diamond}
      \lrb{\Gamma} \otimes I
        \xrightarrow{\cong}
      \lrb{\Gamma}
      \right)\\
% (seq)
\lrb{ \vdash \langle \Gamma \rangle\ M;N\ \langle \Sigma \rangle} \coloneqq  \left(
      \lrb{\Gamma}
        \xrightarrow{\lrb{M}}
      \lrb{\Gamma'}
        \xrightarrow{\lrb N}
      \lrb{\Sigma}
      \right)\\
% (skip)
\lrb{ \vdash \langle \Gamma \rangle\ \textbf{skip}\ \langle \Gamma \rangle} \coloneqq  \left(
      \lrb{\Gamma}
        \xrightarrow{\id}
      \lrb{\Gamma}
      \right)\\
% (while)
\lrb{ \vdash \langle \Gamma, b: \textbf{bit} \rangle\ \textbf{while}\ b\ \textbf{do}\ {M}\ \langle \Gamma, b: \textbf{bit} \rangle} \coloneqq  \left(
      \lrb{\Gamma} \otimes \textbf{bit} 
        \xrightarrow{\mathrm{lfp}(W_{\lrb M})}
      \lrb{\Gamma} \otimes \textbf{bit} \right)\\
% (left)
\lrb{ \vdash \langle \Gamma, x:A \rangle\ y = \lleft_{A,B}\ x\ \langle \Gamma, y: A+B \rangle} \coloneqq  \left(
      \lrb{\Gamma} \otimes \lrb A
        \xrightarrow{ \id \otimes \mathrm{left}_{A,B}}
      \lrb \Gamma \otimes (\lrb A + \lrb B) 
\right)\\
% (right)
\lrb{ \vdash \langle \Gamma, x:B \rangle\ y = \rright_{A,B}\ x\ \langle \Gamma, y: A+B \rangle} \coloneqq  \left(
      \lrb{\Gamma} \otimes \lrb B
        \xrightarrow{ \id \otimes \mathrm{right}_{A,B}}
      \lrb \Gamma \otimes (\lrb A + \lrb B)
\right)\\
% (case)
\llbracket  \vdash \langle \Gamma, y: A+B \rangle\  \ccase\ y\ \textbf{of}\ \{\lleft\ x_1 \to M_1\ |\ \rright\ x_2 \to M_2\}\ \langle \Sigma \rangle \rrbracket \coloneqq \\
\qquad \qquad  \left( \lrb \Gamma \otimes (\lrb{A} + \lrb B) \xrightarrow{d} (\lrb \Gamma \otimes \lrb A) + (\lrb \Gamma \otimes \lrb B) \xrightarrow{\left[\lrb {M_1}, \lrb {M_2} \right]} \lrb \Sigma \right)\\
% (pair)
\lrb{ \vdash \langle \Gamma, x_1: A, x_2: B \rangle\ x=(x_1, x_2) \ \langle \Gamma, x: A \otimes B \rangle} \coloneqq  \left(
      \lrb{\Gamma} \otimes \lrb A \otimes \lrb B
        \xrightarrow{\id}
      \lrb{\Gamma} \otimes \lrb A \otimes \lrb B \right)\\
% (unpair)
\lrb{ \vdash \langle \Gamma, x: A \otimes B \rangle\ (x_1, x_2) = x \ \langle \Gamma, x_1: A, x_2: B \rangle} \coloneqq  \left(
      \lrb{\Gamma} \otimes \lrb A \otimes \lrb B
        \xrightarrow{\id}
      \lrb{\Gamma} \otimes \lrb A \otimes \lrb B \right)\\
% (fold)
\lrb{ \vdash \langle \Gamma, x: A[\mu X. A / X] \rangle\ y = \textbf{fold}\ x\ \langle \Gamma, y: \mu X.A \rangle} \coloneqq  \left(
      \lrb{\Gamma} \otimes \lrb{A[\mu X. A / X]}
        \xrightarrow{\id \otimes \sfold}
      \lrb \Gamma \otimes \lrb{\mu X.A} \right)\\
% (unfold)
\lrb{ \vdash \langle \Gamma, x: \mu X. A \rangle\ y = \textbf{unfold}\ x\ \langle \Gamma, y: A[\mu X. A / X] \rangle} \coloneqq  \left(
      \lrb{\Gamma} \otimes \lrb{\mu X. A}
        \xrightarrow{\id \otimes \sunfold}
      \lrb \Gamma \otimes \lrb{A[\mu X. A / X]} \right)\\
\end{array}
}
$
\caption{Interpretation of \lang{} terms.}
\label{fig:semantics}
\end{figure}

Next, we clarify some of the notation used in Figure~\ref{fig:semantics}.
The map $\diamond_{\lrb A}$ is defined in~\secref{sub:affine}, as already explained.
In order to interpret \textbf{while} loops, we use a Scott-continuous endofunction $W_f$, which is defined as follows.
For a morphism $f:{A \otimes \bit \to A \otimes \bit}$, we set:
\begin{align*}
W_{f} &: \CC \left(A \otimes \bit, A \otimes \bit \right) \to \CC(A \otimes \bit, A \otimes \bit)\\
W_{f}(g) &= \left[ \id \otimes \mathrm{left}_{I,I},\ g \circ f \circ (\id \otimes \mathrm{right}_{I,I}) \right] \circ d_{A,I,I} ,
\end{align*}
where $d_{A,I,I} : A \otimes (I+I) \to (A \otimes I) + (A \otimes I)$ is the isomorphism which is induced by the distributivity of $\otimes$ over $+$ (see Definition~\ref{def:model}).
Finally, for a pointed dcpo $D$ and Scott-continuous endofunction $h : D \to D$, the \emph{least fixpoint} of $h$ is given by
$\mathrm{lfp}(h) \coloneqq \bigvee_{i=0}^\infty h^i(\perp),$ where $\perp$ is the least element of $D$.

%%%%%%%%%%%%%%%%%%%%%%%%%%%%%%%%%%%%%%%
\subsection{Interpretation of configurations}\label{sub:configurations}
%%%%%%%%%%%%%%%%%%%%%%%%%%%%%%%%%%%%%%%
Before we explain how to interpret configurations, we have to show how to
interpret values.
\paragraph{Interpretation of values.}
The interpretation of a value $ \vdash v : A$ is a morphism
$\lrb{ \vdash v: A} : I \xrightarrow{} \lrb A,$ and we shall simply write
$\lrb v$ if its type is clear from context. The interpretation is defined in Figure~\ref{fig:value-interpretation}.
\begin{figure}[t]
\begin{align*}
  \lrb{\cdot \vdash * : I}                    &\coloneqq \id_I\\
  \lrb{Q \vdash \lleft_{A,B} v: A+B}          &\coloneqq \mathrm{left} \circ \lrb v\\
  \lrb{Q \vdash \rright_{A,B} v: A+B}         &\coloneqq \mathrm{right} \circ \lrb v\\
  \lrb{Q_1, Q_2 \vdash (v, w) : A \otimes B}  &\coloneqq ( \lrb v \otimes \lrb w ) \circ \lambda_I^{-1} \\
  \lrb{Q \vdash \fold_{\mu X.A} v : \mu X. A} &\coloneqq \mathrm{fold} \circ \lrb v
\end{align*}
\caption{Interpretation of \lang{} values.}
\label{fig:value-interpretation}
\end{figure}
In order to prove soundness of our \emph{affine} type system, we have to show every value is discardable.
\begin{theorem}\label{thm:affine-values}
For every value $\vdash v : A$, we have: $\diamond_{\lrb A} \circ \lrb{\vdash v : A} = \id_I.$
\end{theorem}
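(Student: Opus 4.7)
The plan is to prove this by structural induction on the derivation $\vdash v : A$, with the inductive hypothesis being exactly the equation $\diamond_{\lrb A} \circ \lrb v = \id_I$. Before starting, I would unfold the concrete shape of $\diamond_{\lrb A}$ in each case using the earlier propositions: for $A = I$ we have $\diamond_I = \id_I$ directly from $\elrb{I} = K_{(I, \id_I)}$; for $A + B$, Proposition~\ref{prop:affine-coproducts} gives $\diamond_{A+B} = [\diamond_A, \diamond_B]$; for $A \otimes B$, Proposition~\ref{prop:affine-monoidal} gives $\diamond_{A \otimes B} = \lambda_I \circ (\diamond_A \otimes \diamond_B)$; and for $\mu X. A$, the discarding map is the one produced by the parameterised initial algebra in $\CCc$.

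The cases $v = *$, $v = \lleft v'$ and $v = \rright v'$ reduce immediately: the first gives $\diamond_I \circ \id_I = \id_I$; for the sum cases one uses the cotupling equation $[\diamond_A, \diamond_B] \circ \mathrm{left} = \diamond_A$ and the induction hypothesis. For $v = (v_1, v_2)$, I would unfold $\lrb{(v_1, v_2)} = (\lrb{v_1} \otimes \lrb{v_2}) \circ \lambda_I^{-1}$, push $\diamond_{A \otimes B}$ through using bifunctoriality of $\otimes$ to obtain $\lambda_I \circ ((\diamond_A \circ \lrb{v_1}) \otimes (\diamond_B \circ \lrb{v_2})) \circ \lambda_I^{-1}$, then apply the induction hypothesis twice to collapse this to $\lambda_I \circ \id_{I \otimes I} \circ \lambda_I^{-1} = \id_I$.

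The only case requiring real content is $v = \fold_{\mu X. A} v'$. Here $\lrb v = \sfold \circ \lrb{v'}$, so the equation to check is $\diamond_{\lrb{\mu X.A}} \circ \sfold \circ \lrb{v'} = \id_I$. The key observation is that Theorem~\ref{thm:fold-unfold-relationship} says precisely that $\efold_{\mu X.A} : \elrb{A[\mu X.A/X]} \to \elrb{\mu X.A}$ is a morphism in $\CCc$ whose underlying morphism in $\CC$ is $\sfold_{\mu X.A}$; unpacking the definition of morphism in a slice category, this is the identity $\diamond_{\lrb{\mu X.A}} \circ \sfold = \diamond_{\lrb{A[\mu X.A/X]}}$. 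Combined with Lemma~\ref{lem:type-sub}(2), which ensures that $\diamond_{\lrb{A[\mu X.A/X]}}$ really is the discarding map on the semantic substitution, the expression reduces to $\diamond_{\lrb{A[\mu X.A/X]}} \circ \lrb{v'} = \id_I$, which is the induction hypothesis.

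The potentially tricky step is exactly this last one: one must know that the discarding map assigned to $A[\mu X.A/X]$ by the affine type interpretation coincides with the one obtained by substituting $\elrb{\mu X.A}$ into $\elrb{X \vdash A}$, so that Theorem~\ref{thm:fold-unfold-relationship} applies verbatim. This is where Lemma~\ref{lem:type-sub}(2) does the heavy lifting; once invoked, the fold case becomes mechanical. Everything else is a straightforward diagram chase from the explicit formulas for the discarding maps on $\otimes_a$ and $+_a$.
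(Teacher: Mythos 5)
Your proof is correct and follows essentially the same route as the paper's: induction on the derivation of $\vdash v : A$, using the explicit discarding maps from Propositions~\ref{prop:affine-monoidal} and~\ref{prop:affine-coproducts} for the tensor and sum cases, and Theorem~\ref{thm:fold-unfold-relationship} (together with Lemma~\ref{lem:type-sub}) to see that $\sfold$ is a slice-category morphism in the fold case. The paper's version is merely terser, phrasing the same argument as ``discardable maps are closed under composition, tensor, injections and fold''; your explicit identification of where type substitution enters is a faithful unpacking of what the paper leaves implicit.
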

\begin{proof}
By construction, $\diamond_{\lrb A}$ enjoys all of the properties established in
\secref{sec:discarding}. The proof proceeds by induction on the derivation of
$\ \vdash v : A.$ The base case is trivial. Discardable morphisms are closed under composition, because $\CCc$ is a category. Moreover, discardable maps are closed under tensor
products (Proposition~\ref{prop:affine-monoidal}).  Using
the induction hypothesis, it suffices to show that the coproduct injections and
folding are discardable maps. But this follows by
Proposition~\ref{prop:affine-coproducts} and
Theorem~\ref{thm:fold-unfold-relationship}, respectively.
\qed\end{proof}
Given a variable context $\Gamma = x_1 : A_1, \ldots, x_n : A_n $, then a value context $ \Gamma \vdash V$ is interpreted by the morphism:
\[ \lrb{ \Gamma \vdash V} = \left( I \xrightarrow{\cong} I^{\otimes n} \xrightarrow{\lrb{v_1} \otimes \cdots \otimes \lrb{v_n}} \lrb \Gamma \right) , \]
where $V = \{ x_1 = v_1, \ldots , x_n = v_n \} $
and we abbreviate this by writing $\lrb{V}$. Note that $\lrb V$ is also discardable due to Theorem~\ref{thm:affine-values}.
\paragraph{Interpretation of configurations.}
A configuration $\Gamma; \Sigma \vdash (M\ |\ V) $ is interpreted as the morphism
\[  \lrb{\Gamma; \Sigma \vdash (M\ |\ V)} =  \left( I \xrightarrow{\lrb{ \Gamma \vdash V}} \lrb \Gamma \xrightarrow{\lrb{ \vdash \langle \Gamma \rangle\ M\ \langle \Sigma \rangle}} \lrb \Sigma \right) . \]
We write $\lrb{(M\ |\ V)}$ for this morphism whenever the contexts are clear.

%%%%%%%%%%%%%%%%%%%%%%%%%%%%%%%%%%%%%%%
\subsection{Soundness and Computational Adequacy}\label{sub:soundness}
%%%%%%%%%%%%%%%%%%%%%%%%%%%%%%%%%%%%%%%
Soundness is the statement that the denotational semantics is invariant under program execution.
\begin{theorem}[Soundness]\label{thm:soundness}
If $\mathcal C \leadsto \mathcal D$, then $ \lrb{\mathcal C} = \lrb{\mathcal D}. $
\end{theorem}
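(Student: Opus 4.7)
The proof will proceed by induction on the derivation of $\mathcal C \leadsto \mathcal D$, handling one case for each reduction rule in Figure~\ref{fig:operational-small-step}. In every case the plan is to unfold the definition of $\lrb{\mathcal C}$ from Section~\ref{sub:configurations} as $\lrb{M} \circ \lrb{V}$, unfold $\lrb{\mathcal D}$ similarly, and then reduce the resulting composite using the interpretations of terms (Figure~\ref{fig:semantics}) and of values (Figure~\ref{fig:value-interpretation}). Since the reduction rules only alter the components of the configuration that are directly pattern-matched, most cases amount to rewriting using the coherence isomorphisms of the monoidal structure together with one equational fact specific to each rule.

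Most cases are routine. For $\newunit$ one uses that $\lrb{*} = \id_I$ so extending $V$ with $u = *$ adds an $\id_I$ to the tensor product of values, which matches precomposition with $\lrb{\Gamma} \xrightarrow{\cong} \lrb{\Gamma} \otimes I$. For $\sskip;P$ and for the two pair rules, the relevant term interpretation is the identity, so there is nothing to check beyond the definitions. For $\lleft$, $\rright$ and for \textbf{fold}/\textbf{unfold} on a value, the updated value in $V$ is, by definition of $\lrb{-}$ on values, the composite of the old $\lrb{v}$ with the coproduct injection (resp. $\sfold$), which is exactly the morphism applied by the term; the \textbf{unfold} case additionally needs that $\sunfold \circ \sfold = \id$ from Definition~\ref{def:fold-unfold}. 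For the two \textbf{case} rules, the key fact is the coproduct equation $[\lrb{M_1}, \lrb{M_2}] \circ \mathrm{left} = \lrb{M_1}$ (and symmetrically), combined with the distributivity isomorphism $d$ appearing in the interpretation of \textbf{case}. The sequencing congruence rule $(P;Q\ |\ V) \leadsto (P';Q\ |\ V')$ is handled directly by the induction hypothesis, since $\lrb{P;Q} \circ \lrb{V} = \lrb{Q} \circ \lrb{P} \circ \lrb{V} = \lrb{Q} \circ \lrb{P';Q}$-component after using the IH on $\lrb{P} \circ \lrb{V} = \lrb{P'} \circ \lrb{V'}$.

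Two cases deserve more attention. The \textbf{discard} rule is where the affine structure is used essentially: one must show that precomposing with $\id_{\lrb\Gamma} \otimes \diamond_{\lrb A}$ and then with the unitor recovers $\lrb{V}$ from $\lrb{V, x=v}$. This reduces, after using naturality of $\lambda$, to the equation $\diamond_{\lrb A} \circ \lrb{v} = \id_I$, which is precisely Theorem~\ref{thm:affine-values}. The \textbf{while} rule is handled using Kleene's fixpoint theorem: by construction $\lrb{\while{b}{M}} = \mathrm{lfp}(W_{\lrb M})$, and $W_{\lrb M}$ is Scott-continuous by $\cpo$-enrichment of the monoidal and coproduct structures, so $\mathrm{lfp}(W_{\lrb M}) = W_{\lrb M}(\mathrm{lfp}(W_{\lrb M}))$. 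Expanding the right-hand side using the definition of $W$, and then using the case analysis on the value of $b$ stored in $V$, one sees that this coincides exactly with the interpretation of the if-then-else term appearing on the right of the reduction.

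The main obstacle will be keeping the bookkeeping for the \textbf{while} case clean, since one must match the symbolic unfolding of $\mathrm{lfp}(W_{\lrb M})$ against the desugaring of $\textbf{if}\ b\ \textbf{then}\ \{M;\while{b}{M}\}$ into a \textbf{case} term and then track how this interacts with the value assignment $V, b=v$. All other cases boil down to short diagram chases using coherence and the semantic identities cited above. No new categorical machinery beyond Sections~\ref{sec:discarding}--\ref{sub:folding} is required.
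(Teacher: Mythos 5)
Your proposal is correct and matches the paper's proof, which simply states ``straightforward induction'' on the reduction relation; you have filled in exactly the case analysis that phrase elides, correctly identifying the two non-routine ingredients (Theorem~\ref{thm:affine-values} for \textbf{discard} and the fixpoint unfolding $\mathrm{lfp}(W_{\lrb M}) = W_{\lrb M}(\mathrm{lfp}(W_{\lrb M}))$ for \textbf{while}). No divergence from the paper's approach.
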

\begin{proof}
Straightforward induction.
\qed\end{proof}
We conclude our technical contributions by proving a computational adequacy result.
Towards this end, we have to assume that our categorical model is not degenerate.
\begin{definition}
A \emph{computationally adequate} \lang{} model is an \lang{} model, where $\id_I \neq \perp.$ 
\end{definition}
A program configuration $\mathcal C$ is said to \emph{terminate}, denoted $\mathcal C \Downarrow$,
if there exists a terminal configuration $\mathcal T,$ such that $\mathcal C \leadsto_* \mathcal T$,
where $\leadsto_*$ is the reflexive and transitive closure of $\leadsto.$
\begin{theorem}[Adequacy]
Let $\vdash \langle \cdot \rangle\ M\ \langle \Sigma \rangle$ be a closed term. Then:
\[ \lrb{M} \neq \perp \text{ iff } (M\ |\ \cdot) \Downarrow. \]
\end{theorem}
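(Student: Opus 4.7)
The plan is to handle the two implications separately: the direction $(\Leftarrow)$ is a short application of soundness, whereas the nontrivial direction $(\Rightarrow)$ requires step-indexing the while-loop interpretation.

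For $(\Leftarrow)$, suppose $(M\ |\ \cdot) \Downarrow$, so $(M\ |\ \cdot) \leadsto_* (\sskip\ |\ V)$ for some well-formed $V$ (by Subject Reduction, Theorem~\ref{thm:subject}). By Soundness (Theorem~\ref{thm:soundness}), $\lrb M = \lrb{(M\ |\ \cdot)} = \lrb{(\sskip\ |\ V)} = \lrb V$. By Theorem~\ref{thm:affine-values} applied componentwise, $\diamond_{\lrb \Sigma} \circ \lrb V = \id_I$. Since $\CC$ is $\cpobs$-enriched, composition is strict in either argument, so $\lrb V = \perp_{I, \lrb\Sigma}$ would force $\diamond_{\lrb\Sigma} \circ \lrb V = \perp_{I,I}$, contradicting $\id_I \neq \perp$ in a computationally adequate model; therefore $\lrb M = \lrb V \neq \perp$.

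For $(\Rightarrow)$, I argue by contraposition: assuming $(M\ |\ \cdot) \not\Downarrow$, I show $\lrb M = \perp$. By Progress (Theorem~\ref{thm:progress}) combined with Subject Reduction, non-termination produces an infinite reduction $(M\ |\ \cdot) \leadsto (M_1\ |\ V_1) \leadsto \cdots$. To extract a semantic witness of divergence, I introduce a family of bounded interpretations $\lrb{-}_n$, defined exactly as $\lrb{-}$ except that for every while loop the least fixed point $\mathrm{lfp}(W_{\lrb{M'}}) = \bigvee_k W_{\lrb{M'}}^k(\perp)$ is replaced by its $n$-th iterate $W_{\lrb{M'}}^n(\perp)$. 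Scott-continuity of $W$, of $\otimes$, of $+$, and of composition then yields $\lrb M = \bigvee_n \lrb M_n$, so it suffices to show $\lrb M_n = \perp$ for every $n$.

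The main obstacle is a bounded-adequacy lemma asserting that, for every well-formed configuration $\mathcal C$ and every $n$, if $\lrb{\mathcal C}_n \neq \perp$ then $\mathcal C \Downarrow$. Its proof goes by induction on $n$ combined with a careful analysis of how each reduction step affects the bounded denotation: the structural, folding/unfolding, sum, and pair rules preserve $\lrb{-}_n$ (bounded soundness, obtained by repeating the proof of Theorem~\ref{thm:soundness} on the truncated interpretation), while each application of the while-rule consumes one unit of the budget via the recurrence $W_f^{n+1}(\perp) = W_f(W_f^n(\perp))$. Since an infinite reduction exhausts any finite budget of while-unfoldings, the assumption $(M\ |\ \cdot) \not\Downarrow$ forces $\lrb{(M\ |\ \cdot)}_n = \perp$ for every $n$, hence by continuity $\lrb M = \bigvee_n \lrb M_n = \perp$, which is the contrapositive of what was to be shown.
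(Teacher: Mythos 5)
Your proposal is correct and follows essentially the strategy the paper itself defers to (the QPL adequacy argument of \cite{qpl-fossacs}): soundness plus Theorem~\ref{thm:affine-values} and $\id_I \neq \perp$ for the easy direction, and finite approximation of the while-loop fixpoints, Scott-continuity, and a bounded-adequacy lemma for the hard direction. The only soft spot is the claim that the bounded-adequacy lemma goes ``by induction on $n$'' with a single global truncation index: when an outer loop unfolds, its counter decreases but the bodies of nested loops are still interpreted at index $n$, so the index does not decrease uniformly; the routine fix is to take the outer induction to be structural on the term (so loop bodies are handled at their full interpretation via the outer hypothesis, using strictness of composition to propagate non-$\perp$-ness) and to induct on the iteration count $W_{\lrb{M}}^{n}(\perp)$ only for the outermost fixpoint.
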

\begin{proof}
By simplifying the adequacy proof strategy of QPL~\cite{qpl-fossacs} in the obvious way.
\qed\end{proof}

%%%%%%%%%%%%%%%%%%%%%%%%%%%%%%%%%%%%%%%%%%%%%%%%%%%%%%%%%%%%%%%%%%%%%%%%%%%%%%
\section{Future Work}
\label{sec:future}
%%%%%%%%%%%%%%%%%%%%%%%%%%%%%%%%%%%%%%%%%%%%%%%%%%%%%%%%%%%%%%%%%%%%%%%%%%%%%%

As part of future work it will be interesting to see whether these methods can
be adapted to also work with coinductive data types and/or with recursive
data types where function types $(\multimap)$ become admissible constructs
within the type recursion schemes. This is certainly a more challenging problem
which would probably require us to assume additional structure within the
model, such as a limit-colimit coincidence~\cite{smyth-Plotkin}, so that we may
deal with the contravariance induced by function types. It is also likely that
we would have to modify the slice construction to accommodate the addition of
limits.

%%%%%%%%%%%%%%%%%%%%%%%%%%%%%%%%%%%%%%%%%%%%%%%%%%%%%%%%%%%%%%%%%%%%%%%%%%%%%%
\section{Conclusion and Related Work}
\label{sec:conclude}
%%%%%%%%%%%%%%%%%%%%%%%%%%%%%%%%%%%%%%%%%%%%%%%%%%%%%%%%%%%%%%%%%%%%%%%%%%%%%%

Since the introduction of Linear Logic~\cite{linear-logic}, there has been a
massive amount of research into finding suitable models for (fragments) of
Linear Logic (see~\cite{mellies-logic} for an excellent overview).  However,
there has been less research into models of affine logics and affine type
systems. The principle difference between linear and affine logic is that
weakening is restricted in the former, but allowed in the latter, so affine
models have to contain additional discarding maps. Most models of affine type
systems (that I am aware of) use some specific properties of the model, such as
finding a suitable (sub)category where the tensor unit $I$ is a terminal
object, in order to construct the required discarding maps (e.g.
\cite{affine-games,quantum-games,qpl-fossacs,qlc-affine}). This means, the solution is provided directly by the \emph{model}.
In this paper, we have taken a different approach, because we present a
\emph{semantic} solution to this problem. The only assumption that we have made
for our model is that the interpretations of the atomic types are equipped with
suitable discarding maps and we then show how to construct all other required
discarding maps by considering an additional and non-standard interpretation
of types within a slice category. Overall, the "model" solution is certainly
simpler and more concise compared to the "semantic" solution presented here. On
the other hand, our solution in this paper is very general and can in principle
be applied to models where the required discarding maps are unknown a priori.

\paragraph{Acknowledgements.} I thank Romain P\'echoux, Simon Perdrix and
Mathys Rennela for discussions about the methods in this paper. I also
gratefully acknowledge financial support from the French projects
ANR-17-CE25-0009 SoftQPro and PIA-GDN/Quantex.

\bibliography{refs}

\appendix
\newpage

\appendix

%%%%%%%%%%%%%%%%%%%%%%%%%%%%%%%%%%%%%%%%%%%%%%%%%%%%%%%%%%%%%%%%%%%%%%%%%%%%%%
\section{Omitted Proofs from Section~\ref{sec:discarding}}
%%%%%%%%%%%%%%%%%%%%%%%%%%%%%%%%%%%%%%%%%%%%%%%%%%%%%%%%%%%%%%%%%%%%%%%%%%%%%%

%%%%%%%%%%%%%%%%%%%%%%%%%%%%%%%%%%%%%%%
\subsection{Proof of Proposition~\ref{prop:reflect}}\label{proof:reflect}
%%%%%%%%%%%%%%%%%%%%%%%%%%%%%%%%%%%%%%%

\begin{proof}
Let $D : \JJ \to \CCc$ be a diagram and let $\mu = ( (M, \diamond_M), \mu_i : (D_i, \diamond_i) \to (M, \diamond_M) )$ be a cocone of $D$ in $\CCc$, such that $U \mu = (M, \mu_i : D_i \to M)$ is a colimiting cocone of $UD$ in $\CC$.
Let $\chi = ( (X, \diamond_X), \chi_i : (D_i, \diamond_i) \to (X, \diamond_X) )$ be another cocone of $D$ in $\CCc.$ We will show there exists a unique cocone morphism $f : \mu \to \chi$ in $\CCc$.

By the universal property of the colimit in $\CC$, we see there exists a unique cocone morphism $f : U\mu \to U \chi$ of $UD$ in $\CC$, i.e. $f: M \to X$ is the unique morphism of $\CC$, such that
\begin{equation}\label{eq:unique-reflect}
f \circ \mu_i = \chi_i \qquad \forall i \in \mathrm{Ob}(\JJ) 
\end{equation}
Since $D: \JJ \to \CCc$ is a diagram, then for any $f : i \to j$ in $\JJ$, we have $D_f : (D_i, \diamond_i) \to (D_j, \diamond_j)$ and thus $\diamond_j \circ D_f = \diamond_i$.
This means we have a cocone $\diamond = (I, \diamond_i : D_i \to I)$ of $UD$ in $\CC$.
By the universal property of the colimit, there exists a unique cocone morphism $g : U \mu \to \diamond,$ i.e. $g: M \to I$ is the unique morphism of $\CC$, such that
\begin{equation}\label{eq:unique-diamond}
g \circ \mu_i = \diamond_i \qquad \forall i \in \mathrm{Ob}(\JJ) 
\end{equation}
But $\mu_i : (D_i, \diamond_i) \to (M, \diamond_M)$ in $\CCc,$ therefore $\diamond_M \circ \mu_i = \diamond_i$ and thus $g = \diamond_M$.
However:
\begin{align*}
   (\diamond_X \circ f) \circ \mu_i
&= \diamond_X \circ (f \circ \mu_i) & \\ 
&= \diamond_X \circ \chi_i &\eqref{eq:unique-reflect} \\ 
&= \diamond_i &(\chi_i : (D_i, \diamond_i) \to (X, \diamond_X) \text{ in } \CCc)
\end{align*}
And therefore by~\eqref{eq:unique-diamond} it follows $ \diamond_X \circ f = g = \diamond_M.$
But this now shows that $f: (M, \diamond_M) \to (X, \diamond_X)$ is a morphism
of $\CCc$. Obviously, $f: \mu \to \chi$ is also a cocone morphism of $D$, because
composition in $\CCc$ coincides with composition in $\CC$.

Finally, to show $f$ is unique, assume that $h: \mu \to \chi$ is a cocone morphism of $D$. But then $h: U\mu \to U\chi$ is a cocone morphism
of $UD$ in $\CC$ and therefore $f = h$.
\end{proof}

\end{document}